\documentclass[letterpaper,11pt]{article}
\usepackage{xcolor}

\def\ShowComment{False} 

\usepackage[english]{babel}

\usepackage{booktabs}
\usepackage{multirow}

\usepackage[numbers, sort]{natbib}

\usepackage[margin=1in]{geometry}


\usepackage{tcolorbox}

\usepackage{fullpage}
\usepackage[utf8]{inputenc}
\usepackage{amsmath,amsfonts,amsthm,amssymb}
\usepackage{setspace}
\usepackage{fancyhdr}
\usepackage{lastpage}
\usepackage{extramarks}
\usepackage{chngpage}
\usepackage{soul,color}
\usepackage{graphicx,float,wrapfig,lipsum}
\usepackage{listings}
\usepackage{xcolor}      
\usepackage{enumerate}
\usepackage{enumitem}
\usepackage{xspace}
\usepackage[T1]{fontenc}

\usepackage{xcolor}
\usepackage{nameref}
\definecolor{ForestGreen}{rgb}{0.1333,0.5451,0.1333}
\definecolor{DarkRed}{rgb}{0.65,0,0}
\definecolor{Red}{rgb}{1,0,0}
\usepackage[linktocpage=true,
pagebackref=true,colorlinks,
linkcolor=DarkRed,citecolor=ForestGreen,
bookmarks,bookmarksopen,bookmarksnumbered]
{hyperref}
\usepackage{cleveref}

\usepackage{indentfirst}
\usepackage{algpseudocode}
\usepackage{algorithm}  
\usepackage{algorithmicx} 
\usepackage{verbatim}
\usepackage{todonotes}
\usepackage{appendix}

\newcommand{\x}{\mathbf{x}}
\newcommand{\y}{\mathbf{y}}

\newcommand{\A}{\mathcal{A}}
\newcommand{\U}{\mathcal{U}}
\newcommand{\V}{\mathcal{V}}
\newcommand{\D}{\mathcal{D}}
\newcommand{\F}{\mathbf{F}}
\newcommand{\E}{\mathbf{E}}
\newcommand{\R}{\mathbb{R}}
\newcommand{\Z}{\mathbb{Z}}

\newcommand{\M}{\mathcal{M}}
\newcommand{\ses}{\mathcal{S}}
\newcommand{\dist}{\mathrm{dist}}

\newcommand{\C}{\mathcal{C}}

\newcommand{\poly}{\mathrm{poly}}

\DeclareMathOperator{\congestion}{cong}

\newcommand{\backvec}[1]{\reflectbox{$\vec{\reflectbox{\!$#1$}}$}}

\makeatletter
\def\namedlabel#1#2{\begingroup
    #2%
    \def\@currentlabel{#2}%
    \phantomsection\label{#1}\endgroup
}
\makeatother

\usepackage{thmtools}

\declaretheorem[numberwithin=section]{theorem}
\declaretheorem[numberlike=theorem]{lemma}

\declaretheorem[numberlike=theorem]{fact}
\declaretheorem[numberlike=theorem]{proposition}

\declaretheorem[numberlike=theorem]{claim}

\declaretheorem[numberlike=theorem,style=definition]{definition}

\usepackage{mathrsfs}  

\Crefname{ALC@unique}{Line}{Lines}

\ifdefined\ShowComment

\def\Mark#1{\marginpar{$\leftarrow$\fbox{T}}\footnote{$\Rightarrow$~{\sf\textcolor{purple}{#1 --Mark}}}}

\def\zhongtian#1{\marginpar{$\leftarrow$\fbox{Z}}\footnote{$\Rightarrow$~{\sf\textcolor{blue}{#1 --Zhongtian}}}}

\else
\def\Mark#1{}
\def\zhongtian#1{}

\fi

\setlength{\parskip}{6pt}

\title{Undirected Multicast Network Coding Gaps \\via Locally Decodable Codes}

\author{
Mark Braverman\thanks{Princeton University, Research supported in part by the NSF Alan T. Waterman Award, Grant No. 1933331.}
\and
Zhongtian He\thanks{Princeton University}
}

\date{}

\begin{document}

\begin{titlepage}
    \thispagestyle{empty}
    \maketitle
    \begin{abstract}
        \thispagestyle{empty}
        The network coding problem asks whether data throughput in a network can be increased using coding (compared to treating bits as commodities in a flow). While it is well-known that a network coding advantage exists in directed graphs, the situation in undirected graphs is much less understood -- in particular, despite significant effort, it is not even known whether network coding is helpful at all for unicast sessions. 

In this paper we study the multi-source multicast network coding problem in {\em undirected} graphs. There are $k$ sources broadcasting each to a subset of nodes in a graph of size $n$. The corresponding combinatorial problem is a version of the  Steiner tree packing problem, and the network coding question asks whether the multicast coding rate exceeds the tree-packing rate. 

We give the first super-constant bound to this problem, demonstrating an example with a coding advantage of $\Omega(\log k)$. In terms of graph size, we obtain a lower bound of $2^{\tilde{\Omega}(\sqrt{\log \log n})}$. We also obtain an upper bound of $O(\log n)$ on the gap. 

Our main technical contribution is a new reduction that converts locally-decodable codes in the low-error regime into multicast coding instances. This gives rise to a new family of explicitly constructed graphs, which may have other applications. 
    \end{abstract}
\end{titlepage}

\setcounter{tocdepth}{2}

\renewcommand{\baselinestretch}{0.7}\normalsize
\tableofcontents
\renewcommand{\baselinestretch}{1.0}\normalsize
\thispagestyle{empty}
\newpage
\addtocontents{toc}{\protect\thispagestyle{empty}} 
\setcounter{page}{1}

\section{Introduction}

Optimizing network throughput is a key problem in both combinatorial optimization and coding theory. Relevant objectives include maximizing unicast and broadcast throughput, improving error resilience, enhancing security, etc. Traditional routing treats data packets as physical commodities, giving rise to combinatorial problems such as max-flow and multi-commodity-flow. 
Network coding extends the routing model by allowing data to be encoded and combined at intermediate nodes, potentially offering significant advantages in various settings. While the benefits of network coding in directed graphs are well understood (e.g., \cite{ahlswede2000network,li2003linear,ho2006random,Yeung2006NetworkCoding,fragouli2007network,ho2008network}), key questions in the undirected setting remain open despite much effort.

In fact, a well-known conjecture\footnote{The conjecture is known by several names, including the Network Coding Conjecture, Li and Li's Conjecture, and the Multiple Unicast Conjecture.} states that network coding offers \emph{no} advantage in throughput for \emph{multi-source unicast} in undirected graphs \cite{li2004network,harvey2004comparing}, meaning that the best achievable throughput can be obtained solely through multi-commodity flow routing. Not only the problem is interesting on its own, but it also has strong implications in complexity theory, where the positive resolution of the conjecture would imply lower bounds in external memory algorithm complexity \cite{farhadi2019lower}, in cell-probe model \cite{adler2006capacity}, and even super-linear circuit lower bounds for very natural mathematical problems such as integer multiplication \cite{afshani2019lower}.  Given these connections, it is not surprising that despite numerous attempts \cite{li2004coding,harvey2006capacity,kramer2006edge,langberg2009multiple,xiahou2014geometric,braverman2017coding,yin2017reduction,haeupler2020network} and resolutions on special classes of instances \cite{okamura1981multicommodity,adler2006capacity,jain2006capacity,kramer2006edge}, the conjecture remains open. In contrast, the advantage is known to be $\Omega(n)$ for multi-source unicast in directed graphs of size $n$ \cite{li2004coding,harvey2006capacity}.

Another relevant setting is \emph{single-source multicast}, where one source transmits information to multiple-destinations. In undirected graphs, the known network coding advantage, also called the \emph{coding gap}, lies between $8/7$ \cite{agarwal2004advantage} and $2$ \cite{li2004network}. The coding gap is defined as the ratio of the network coding throughput to the \emph{Steiner tree packing number}, where the latter serves as the non-coding multicast benchmark. We emphasize that a Steiner tree packing combinatorially implies a valid network coding solution: given such a packing, one can transmit information along the Steiner trees, since once a vertex receives some information, it is allowed to forward it to multiple successors.
In directed graphs, the single-source multicast problem is studied in the seminal work on network coding \cite{ahlswede2000network}, which shows that an exact optimal network coding solution can always be achieved. Moreover, the coding gap can be as large as $\Omega(\log n)$ \cite{agarwal2004advantage,jaggi2005polynomial}, matching the integrality gap of linear programming relaxation for the directed Steiner tree problem.

Thereby a natural problem {\em multi-source multicast} generalizing both settings above arises. In this setting, we have multiple source nodes, each communicating its information to a set of destination nodes. The combinatorial non-coding throughput of this problem is \emph{multi Steiner tree packing number}. We study the ratio of multi-source multicast network coding throughput to the corresponding multi Steiner tree packing number in undirected graphs:

{\bf \em ``How large the advantage can network coding obtain for multi-source multicast throughput?''}

Even though the multi-source multicast problem is well-studied, no general asymptotic bounds for the network coding gap in undirected graphs were previously known. 
In a seminal work, Ahlswede et al. \cite{ahlswede2000network} study the multi-source multicast problem in directed graphs, establishing an optimal network throughput for a special case where all sources share the same set of sinks, while leaving the general case as an open problem. Since then, the multi-source multicast problem has been studied extensively from both theoretical and empirical perspectives. To name a few, this includes an exact algebraic formulation of feasibility \cite{koetter2003algebraic}, cut-set outer bounds (e.g., \cite{cover1999elements,harvey2006capacity,thakor2016cut}), extensions to noisy channels \cite{lim2011noisy}, and settings with security guarantees \cite{cohen2018secure}. Notably, Langberg and Médard \cite{langberg2009multiple} show that, for uniform demands, the multi-source unicast non-coding throughput is at least one-third of the multi-source multicast coding throughput. In this context, the multicast problem is defined by modifying each source’s sink set to include all sinks from the unicast problem. This connection positions the multi-source multicast problem in undirected graphs as a partial step toward either proving or disproving the network coding conjecture.


In this work, we prove asymptotic lower and upper bounds for the multi-source multicast problem in undirected graphs. Let $n$ denote the graph size and $k$ the number of sources. We show the existence of network coding instances with a coding gap of $\Omega(\log k)$ for $k$-source multicast sessions, which is the first super-constant bound to this problem. To achieve these coding gaps, we develop novel connections between network coding and locally decodable codes (LDCs), which constitutes the main technical contribution of this work.

In terms of graph size $n$, this coding gap translates into $\displaystyle{2^{\Tilde{\Omega}(\sqrt{\log\log n})}}$ \footnote{We use the notation $\Tilde{\Omega}(f(n))$ to denote $\Omega\left(f(n) / \poly\log f(n)\right)$, i.e., hiding $\poly\log f(n)$ factors.}, utilizing a particular family of LDCs known as matching vector codes \cite{efremenko20093,dvir2011matching,yekhanin2012locally,bhowmick2013new,dvir2013matching}, which excel in low-query regimes.
Since our reduction is black-box, any improvement in LDC constructions would directly lead to stronger lower bounds on the network coding gap. We elaborate on this connection in the technique overview section.

\begin{theorem}
\label{thm: coding gap}
    There exists a family of network coding instances for $k$ multicast sessions in undirected graphs, that have coding gaps at least $\Omega(\log k)$. In term of graph size $n$, the gap is at least $2^{\Omega\left(\sqrt{\log\log n/\log\log\log n}\right)}$.
\end{theorem}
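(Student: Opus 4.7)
The plan is to reduce locally-decodable codes (LDCs) to multi-source multicast instances so that the LDC's encoding naturally yields a high coding rate while its lower-bound properties force any Steiner tree packing to be inefficient. The main technical engine we need is a reduction of the following shape: given a $q$-query linear LDC $C:\Sigma^k \to \Sigma^N$, construct an undirected graph $G$ with $k$ sources $s_1,\dots,s_k$ and designated destination sets such that (i) there is a coding scheme in which each source delivers its symbol at rate $R_{\text{code}}$, and (ii) every multi-Steiner-tree packing achieves rate at most $R_{\text{pack}}$ with $R_{\text{code}}/R_{\text{pack}} = \Omega(\log k)$.

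First I would specify the graph. The natural architecture has three layers: the $k$ sources, a middle layer of $N$ \emph{codeword nodes} connected to the sources by a bipartite structure that mirrors the encoding linear map, and, for each message coordinate $i\in [k]$, a cluster of \emph{decoder nodes} that form the destination set for $s_i$ and are attached through a narrow bottleneck to exactly the $q$ codeword positions queried by the LDC decoder for coordinate $i$. Edge capacities are chosen so that the cut between sources and the codeword layer can support one round of encoding, and the decoder-side bottlenecks are tight enough to force $q$-query behaviour.

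Next I would exhibit the coding scheme. Because the LDC is linear, intermediate nodes in the codeword layer can compute the $N$ symbols of $C(x_1,\dots,x_k)$ by taking the prescribed linear combinations of messages forwarded from the sources; each decoder cluster then receives the $q$ queried symbols and runs the local decoder to recover $x_i$. This yields $R_{\text{code}} = \Omega(1)$ per session. The hard part of the proof is upper-bounding $R_{\text{pack}}$: without coding, each $s_i$ must route its raw symbol along edge-disjoint Steiner trees to every decoder in cluster $i$, so the usable trees are essentially constrained by a congestion-LP over the bipartite source/codeword layer. The argument I would pursue is that LDC-based incidence forces any fractional Steiner tree packing to ``spread'' each source across $\Omega(\log k)$ codeword nodes to reach all decoders, producing a $\log k$ overhead on the middle-layer capacity. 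Concretely, I expect a dual/cut certificate of the form ``the $k$ sessions collectively cross the bipartite cut $\Omega(k\log k)$ times, but the cut has capacity $O(k)$,'' which pins down the $\Omega(\log k)$ gap. Establishing this cut bound from the combinatorial structure of the LDC queries (and ruling out clever tree re-routing that avoids the bottlenecks) is the step I expect to be the main obstacle.

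Finally, I would instantiate the reduction with matching-vector codes \cite{efremenko20093,dvir2011matching,yekhanin2012locally}, which give constant-query linear LDCs with sub-exponential codeword length $N = \exp\!\exp(\tilde O(\sqrt{\log k}))$. Since the graph we construct has size $n = \poly(N,k)$, we have $\log\log n = \tilde O(\sqrt{\log k})$, so $\log k = 2^{\tilde\Omega(\sqrt{\log\log n})}$; tracking the $\poly\log$ factors hidden in the matching-vector construction gives the claimed $2^{\Omega(\sqrt{\log\log n/\log\log\log n})}$ bound in terms of $n$, and the $\Omega(\log k)$ bound in terms of $k$ follows directly from the reduction.
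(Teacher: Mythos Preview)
Your high-level plan---reduce LDCs to multicast instances so that encoding gives a good coding scheme while packing is forced to be expensive---matches the paper's. But the concrete mechanism you sketch misses the key ideas and, as written, does not produce an $\Omega(\log k)$ gap.

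\textbf{Where the packing argument breaks.} You attach each destination set $R_i$ through a single bottleneck to \emph{one} $q$-tuple of codeword positions, and then hope for a cut certificate of the form ``$k$ sessions cross the bipartite cut $\Omega(k\log k)$ times while the cut has capacity $O(k)$.'' Neither natural cut does this: the source--codeword cut has capacity $\Theta(Nk)$ (each codeword symbol depends on many source bits), and the codeword--decoder cut has capacity $\Theta(kq)$; in both cases a Steiner tree for session $i$ can be a short path using $O(1)$ or $O(q)$ cut edges, so the $k$ sessions together use only $O(kq)$ crossings and there is no $\log k$ obstruction. The paper's mechanism is different in two essential ways. First, it uses the \emph{decoding radius} $\delta$ (which you never invoke) to create, for each $i$, \emph{many} disjoint $q$-tuples that decode $x_i$, and places one sink at each such tuple; so $|R_i|=\Theta(\delta N/q)$ rather than $1$. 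Second, it ensures via random sampling and pruning (or a binary-tree gadget) that any two sinks in the same $R_i$ are at distance $\Omega(\log k)$ in $G\setminus F$, where $F$ is the source--codeword cut. The gap then comes from an LP-dual argument: any Steiner tree spanning $R_i$ must, for each sink, either use a cut edge or contain a private ball of $\Omega(\log k)$ non-cut edges, so the tree has $\Omega(|R_i|)$ dual cost while the total dual budget is only $|F|+O(m/\log k)$. The $\log k$ factor is a \emph{distance} phenomenon, not a cut-crossing count; your sketch does not set up this distance structure and the ``spreading over $\Omega(\log k)$ codeword nodes'' intuition does not survive to a proof. (Relatedly, the paper co-locates all $k$ sources at one vertex; the interesting structure is entirely on the sink side.)

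\textbf{The $n$-bound computation is wrong.} You instantiate with constant-query matching-vector codes, so $N=\exp\exp(\tilde O(\sqrt{\log k}))$ and hence $\log\log n=\tilde O(\sqrt{\log k})$. Inverting this gives $\log k=\tilde\Omega((\log\log n)^2)$, \emph{not} $2^{\tilde\Omega(\sqrt{\log\log n})}$; you have inverted a square root as if it were an exponential. The paper reaches the stated $2^{\Omega(\sqrt{\log\log n/\log\log\log n})}$ bound only by letting the query complexity grow: it takes $q=(\log k)^{1/4}$ (with $\delta=\Theta(1/q)$), which via the main lemma gives gap $\Omega((\delta/q)\log k)=\Omega(\sqrt{\log k})$, while the corresponding MV code has $N=\exp\exp(\tilde O((\log\log k)^2))$, yielding $\log\log k=\tilde\Omega(\sqrt{\log\log n})$ and hence the claimed bound. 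Constant-$q$ codes only give a polylogarithmic-in-$\log n$ gap.
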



For the upper bound, we show that the coding gap for multi-source multicast sessions is at most $O(\log n)$. Our approach relates the network coding gap to the LP integrality gap for the dual of the multi Steiner packing problem. This integrality gap can, in turn, be bounded using an $O(\log n)$-approximate cut-tree packing \cite{racke2008optimal}, a technique that has also been used in the context of oblivious routing.

\begin{theorem}
\label{thm: upper bound coding gap}
    The multi-source multicast coding gap in undirected graphs is at most $O(\log n)$.
\end{theorem}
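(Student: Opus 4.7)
The plan is to reduce to the case of capacitated trees, on which the multi-source multicast coding gap is $1$, and then use R\"acke's cut-tree decomposition \cite{racke2008optimal} to transfer the bound back to $G$ with only an $O(\log n)$ loss.

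First I would establish that the coding gap equals $1$ on any capacitated tree $T$. For each session $i$ with source $s_i$ and sink set $T_i$, the Steiner tree in $T$ spanning $\{s_i\}\cup T_i$ is unique; call it $\tau_i$. The fractional Steiner tree packing LP on $T$ is then $\max \sum_i r_i$ subject to $\sum_{i : e\in \tau_i} r_i \leq c_T(e)$ for every tree edge $e$. The network coding cut-set outer bound at the cut defined by a tree edge $e$ reads $\sum_{i :\, e \text{ separates session } i} r_i \leq c_T(e)$, and since $e$ separates session $i$ iff $e\in\tau_i$, the two LPs coincide. Hence $R_{\mathrm{NC}}(T) = R_{\mathrm{TP}}(T)$.

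Next I would invoke \cite{racke2008optimal} to obtain, for the given capacitated graph $G$ on $n$ vertices, a capacitated tree $T$ with leaf set $V(G)$ together with an embedding $\phi\colon E(T)\to\{\text{paths in }G\}$ such that (i) every flow feasible on $T$ can be routed in $G$ with congestion at most $1$ via $\phi$, and (ii) $c_T(S, \bar S)\geq c_G(S, \bar S)/O(\log n)$ for every cut $(S,\bar S)\subseteq V(G)$. Given any network-coding-feasible rate vector $\{r_i\}$ on $G$, the cut-set outer bound yields $\sum_{i \text{ separated by } S} r_i\leq c_G(S, \bar S)$ for every cut $S$; combined with (ii), the rescaled vector $\{r_i/O(\log n)\}$ satisfies the analogous cut-set inequalities on $T$ and, by the first step, is realizable as a fractional Steiner tree packing on $T$ of total rate $\sum_i r_i/O(\log n)$.

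Finally, each Steiner tree used in the packing on $T$ pulls back via $\phi$ to a connected subgraph of $G$ containing the corresponding session's terminals; any spanning tree of that subgraph is a valid Steiner tree in $G$. Property (i) guarantees that the pulled-back packing respects $c_G$ edge-wise, yielding a Steiner tree packing in $G$ of total value at least $\sum_i r_i/O(\log n)$. Taking $\{r_i\}$ to realize the network coding optimum gives $R_{\mathrm{TP}}(G)\geq R_{\mathrm{NC}}(G)/O(\log n)$, which is the $O(\log n)$ coding gap asserted in Theorem~\ref{thm: upper bound coding gap}. The main obstacle is matching R\"acke's theorem to the exact form used here: statements in the literature are typically phrased via distributions over hierarchical trees, possibly with internal Steiner nodes, so some bookkeeping is needed to verify that the cut-domination property (ii) and the congestion-$1$ embedding property (i) can be obtained simultaneously, and that Steiner tree packings on $T$ pull back cleanly to $G$ across $\phi$ when internal nodes are present. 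Modulo this technical verification, the remainder is a routine capacity accounting.
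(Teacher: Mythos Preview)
Your proposal is correct and matches the paper's approach: the paper upper-bounds the network coding throughput by the generalized sparsest cut $\Psi$ via the cut-set bound (your use of the outer bound), then shows the multi-Steiner packing number is at least $\Psi / O(\log n)$ by constructing Steiner packings on R\"acke's cut-trees and pulling them back through the embedding exactly as you describe. Your step~1 (coding gap $=1$ on trees) is the same as the paper's implicit observation that on a tree the Steiner-packing LP constraints coincide with the edge-cut constraints, and, as you anticipate in your obstacle paragraph, the paper works directly with the full distribution over trees rather than a single tree, taking the convex combination of the pulled-back packings and bounding congestion via the $\alpha$-approximate cut-tree packing inequality.
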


This matches the best known upper bound for the coding gap of multi-source unicast in the undirected graph. It is worth noting that, any improvement of the upper bound to $o(\log n)$ even in the multi-source unicast setting would yield a super-linear circuit lower bound for integer multiplication \cite{afshani2019lower}. 

Unlike graph size~$n$, in terms of the number of sources~$k$, the best known upper bound remains the trivial~$O(k)$. 
Closing this exponential gap is an intriguing open problem. In contrast, for multi-source unicast, an upper bound of $O(\log k)$ on the network coding gap is known, as a consequence of the approximate max-flow min-cut theorem (also known as the multicommodity flow-cut gap) ~\cite{aumann1998log,linial1995geometry}.

The observations and results of this work raise several compelling open problems, as the multi-source multicast setting admits a variety of approaches from both coding theory and graph theory. In particular, improving the lower bound in terms of the graph size $n$ may be possible through the construction of more efficient locally decodable codes (LDCs)—especially via a novel variant we introduce, called robust distance LDC codes, discussed in \Cref{sec: robust dist ldc}. Closing the exponential gap in terms of the number of sources $k$ likely requires new combinatorial insights in graph theory. Beyond being interesting on their own, progress on these questions may also serve as intermediate steps toward resolving the network coding conjecture.


\subsection{Related Works}

\label{sec: related}

\paragraph{Completion Time} While it remains unknown whether network coding provides an advantage for the throughput of the multi-source unicast problem in undirected graphs, coding gaps do arise when considering completion time, also known as makespan, instead of throughput. The makespan problem is a generalization of throughput, as the maximum throughput can be defined as $\sup_{r\to\infty}r/C(r)$, where $C(w)$ is the makespan for the instance after increasing all demands by a factor of $r$. For the makespan of multi-source unicast in undirected graphs, Haeupler et al. \cite{haeupler2020network} establish a polylogarithmic coding gap in terms of the number of sources $k$. Additionally, they prove an upper bound that is polylogarithmic in $k$ and the ratio of the summation to the smallest demand for $k$-source unicast instances. It is an interesting open problem to bound the makespan coding gap for multi-source multicast sessions. The completion time of more general distributed computing tasks with security guarantees can also be improved using network coding~\cite{hitron2022broadcast,parter2023secure}.

\paragraph{Graph Product} Previous lower bounds on the network coding gap in undirected graphs are primarily obtained via the graph product technique—a powerful method that amplifies a constant gap into an asymptotically large one. This approach was used by Braverman et al.~\cite{braverman2017coding} to establish a dichotomy in multi-source unicast throughput \cite{braverman2017coding}: the network coding gap for multi-source unicast thoughput is either exactly~1 or at least $\Omega((\log n)^c)$ for some constant $c > 0$. Haeupler et al.~\cite{haeupler2020network} later generalized this framework to show a polylogarithmic coding gap for completion time. While the graph product technique is very useful for establishing lower bounds for undirected multi-source unicast, its applicability to the undirected multicast setting remains unclear. In directed graphs, graph product has been applied to prove polynomial separations between linear and non-linear codes in network coding~\cite{blasiak2011lexicographic,lovett2014linear}.

\paragraph{Locally Recoverable Codes and Fountain Codes} 
Our use of LDCs to improve network coding throughput naturally connects to the broader landscape of fault-tolerant data access in distributed systems. In particular, locally recoverable codes (LRCs)\footnote{Also known as locally reconstructible or locally repairable codes.} have been extensively studied for their role in reducing repair costs in distributed storage settings (e.g., \cite{huang2012erasure,tamo2014family,papailiopoulos2014locally,cadambe2015bounds,barg2017locally,guruswami2019long,martinez2019universal,micheli2019constructions,cai2021construction,kadekodi2023practical}). LRCs aim to minimize the number of nodes accessed during data recovery—an idea closely related to our goal of minimizing the number of queries in LDC-based constructions. Similar motivations underlie work on fountain codes which addressed efficient data dissemination under erasure \cite{byers1998digital,luby2002lt,shokrollahi2006raptor}.


\subsection{Preliminaries}
\label{sec: preliminaries}

\paragraph{Network Coding} A \emph{multi-source multicast instance} $\M = (G, \mathcal{S})$ is defined over a communication network represented by an undirected graph $G = (V, E)$, with edge capacities given by a function $c: E \to \R_+$. The instance consists of $k$ \emph{sessions}, denoted by $\mathcal{S} = \{(s_i, R_i, d_i)\}_{i=1}^k$, where each session is specified by a source vertex $s_i \in V$, a set of sink vertices $R_i \subseteq V$, and a communication demand $d_i \in \R_+$. Let $M_i$ denote the set of all messages that source $s_i$ wishes to send, and define $M := \prod_{i=1}^k M_i$ as the set of all message combinations across sessions. Each message $m_i \in M_i$ is sampled independently from distribution $\D_i$ to produce a distribution $\D$ of $M$'s. 

The formal definition of a \emph{network coding solution} is deferred to \Cref{app: network coding}. Here, we provide a simplified version that, while restricting the full expressive power of network coding, is sufficient for deriving our lower bounds. Given an undirected graph $G$, the solution specifies a direction for each edge $e$ to form a directed graph $\hat{G}$, an associated alphabet $\Gamma(e)$, and an encoding function $f_e : M \to \Gamma(e)$ that determines the symbol transmitted along edge $e$. These functions must satisfy the following conditions:

\begin{itemize}
    \item \textbf{Correctness:} Each sink must be able to recover the message from its corresponding source.
    \item \textbf{Causality:} The symbol transmitted on each edge $e$ must be computable from the symbols received at its tail vertex.
\end{itemize}

We assume that $\hat{G}$ is a directed acyclic graph (DAG). Although the general case may involve cycles, the full definition—based on time-expanded graphs—is presented in \Cref{app: network coding}. That general framework is only required when we analyze upper bounds on network coding throughput in \Cref{sec: ub}. The \emph{throughput} of a network coding solution is defined as the largest value $r$ for which there exists a \emph{scale} $c^* \ge 0$ (i.e., allowing block length, time, or packet size expansion) such that the following conditions hold:

\begin{itemize}
    \item For each source $s_i$, the entropy of its message satisfies $H(m_i) \ge r \cdot d_i \cdot c^*$. 
    \item For each edge $e \in E$, the entropy of the information transmitted along $e$ is at most $c_e \cdot c^*$.
\end{itemize}

The \emph{throughput} of a multi-source multicast instance is the throughput of the network coding solution maximized over all possible distributions $\D$. Note that the common scale $c^*$ in the throughput definition is important: in the network coding literature, throughput characterizes the asymptotic feasibility of the solution.

\paragraph{Multi Steiner Tree Packing} The non-coding throughput of a multi-source multicast problem is captured by the \emph{multi-Steiner packing number}. It is defined as the largest value $\tau$ such that, for each source $s_i$, there exists a Steiner tree packing with terminal set $S_i := R_i \cup \{s_i\}$ and total weight $\tau \cdot d_i$, subject to the constraint that the combined packing across all sources respects the edge capacities. This quantity characterizes the maximum achievable throughput in the absence of network coding—that is, when information can be duplicated but not encoded. A formal LP formulation is given in \Cref{app: Steiner packing}.

\paragraph{Locally Decodable Codes} Denote by $\Delta(\x,\y)$ the Hamming distance between vectors $\x$ and $\y$. Denote by $\A^\y$ an algorithm that accesses the vector $\y$ through coordinate queries. Let $\F$ be a finite field.
A code $\C : \F^k \to \F^N$ is said to be $(q, \delta, \epsilon)$-locally decodable if there exists a randomized decoding algorithm $\A$ such that:

\begin{enumerate}
    \item For all $\x \in \F^k, i\in [k]$ and all $\y \in \F^N$ satisfying $\Delta(\C(\x), \y) \le \delta$, we have
    \[
        \Pr[\A^\y(i) = \x_i] \ge 1 - \epsilon,
    \]
    where the probability is over the internal randomness of $\A$.
    \item The algorithm $\A$ makes at most $q$ queries to $\y$.
\end{enumerate}

The \emph{rate} of a code is defined as the ratio of the message length to the codeword length, i.e., $k/N$. The parameter $q$, known as the \emph{query complexity}, denotes the number of codeword symbols accessed during decoding. The parameter $\delta$ is referred to as the \emph{decoding radius}—the maximum fraction of errors the decoder can tolerate while still recovering individual message symbols with high probability. We say a code satisfies the \emph{linearity of decoding} property (or simply that the code is \emph{linear}) if the decoder $\mathcal{A}$ always outputs a linear combination of the queried codeword symbols.

\begin{definition}[Perfect Smoothness]
    A code satisfies the \emph{perfect smoothness} if, for any $\x \in \F^k$ and any $i \in [k]$, each query made by $\A(i)$ is uniformly distributed over the coordinate set $[N]$, and the decoder always returns the correct message $\x_i$ when given an uncorrupted codeword, i.e. $\A^{\C(\x)}(i) = \x_i$.
\end{definition}

The following fact will be useful in the construction of matching vector codes.

\begin{fact}[\cite{trevisan2004some}]
\label{fact: smooth code imply ldc}
    Any code that satisfies perfect smoothness and makes $q$ queries is also $(q,\delta,q\delta)$-locally decodable for all $\delta < \frac1{q}$.
\end{fact}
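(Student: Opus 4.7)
The plan is a direct union-bound argument exploiting the two guarantees that perfect smoothness provides: marginal uniformity of each query and correctness on uncorrupted codewords. Fix a message $\x \in \F^k$, an index $i \in [k]$, and a received word $\y \in \F^N$ with $\Delta(\C(\x), \y) \le \delta$ (reading $\delta$ as a fraction of coordinates, so the corruption set $B := \{j \in [N] : \y_j \neq \C(\x)_j\}$ has size $|B| \le \delta N$). Let $Q_1, \dots, Q_q$ denote the (possibly adaptive) coordinates queried by $\A^\y(i)$.

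First, I would bound the chance that any single query lands inside $B$. By the perfect smoothness hypothesis, each individual $Q_j$ is marginally uniform over $[N]$, so $\Pr[Q_j \in B] = |B|/N \le \delta$. A union bound over the $q$ queries then gives
\[
    \Pr\bigl[\,\exists\, j \in [q] : Q_j \in B\,\bigr] \;\le\; q\delta,
\]
which is less than $1$ precisely because $\delta < 1/q$.

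Next, I would observe that on the complementary event—no query lands in $B$—the values $\y_{Q_1}, \dots, \y_{Q_q}$ that $\A$ reads coincide with $\C(\x)_{Q_1}, \dots, \C(\x)_{Q_q}$. Since $\A$ is a function of its random coins and the answers it receives, its output on $\y$ in this case equals its output on $\C(\x)$ under the same coins, which by the second part of perfect smoothness is exactly $\x_i$. Therefore
\[
    \Pr\bigl[\A^\y(i) = \x_i\bigr] \;\ge\; 1 - q\delta,
\]
establishing the $(q, \delta, q\delta)$-local decodability. The query count is preserved because $\A$ is unchanged.

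There is essentially no obstacle here; the only subtlety worth flagging is that the argument does not require queries to be independent or non-adaptive, only that each $Q_j$ is \emph{marginally} uniform on $[N]$, which is what perfect smoothness supplies. Adaptivity is harmless since the union bound treats the queries one at a time.
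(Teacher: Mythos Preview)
Your proposal is correct and is exactly the standard argument. The paper does not supply its own proof of this fact---it is quoted from \cite{trevisan2004some}---but the identical union-bound reasoning appears verbatim in the paper's proof of \Cref{fact: ldc decode}, so your approach matches what the authors rely on.
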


\subsection{Technique Overview}
\label{sec: tech overview}

\paragraph{Lower Bound.} The core technical contribution is a reduction from the network coding gap to locally decodable codes.

\begin{lemma}[Main Lemma]
\label{lem: LDC to NC}
    Suppose there exists a linear $(q,\delta,q\delta)$-locally decodable code $\C:\F^k\rightarrow \F^N$ satisfies perfect smoothness. Then, the undirected network coding gap for multi-source multicast is at least $\Omega((\delta\log k)/q)$, with the graph size in the gap instance given by $n = \Theta(Nk)$.
\end{lemma}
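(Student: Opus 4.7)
The plan is to turn the LDC $\C$ into a three-layer multi-source multicast instance $\M_\C=(G_\C,\ses)$ and to bound the coding and multi-Steiner-packing throughputs separately. Concretely, the vertex set consists of sources $s_1,\dots,s_k$ (one per message coordinate), middle nodes $v_1,\dots,v_N$ (one per codeword coordinate), and sinks $t_{i,r}$ for each session $i$ and each random seed $r$ of the decoder $\A(i)$; I attach $t_{i,r}$ to the $q$ middle nodes that $\A(i)$ queries under seed $r$, so $t_{i,r}$ has degree exactly $q$. Every source is connected to every middle node, every middle node to its incident sinks, and session $i$ has unit demand on the sink set $\{t_{i,r}\}_r$. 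Edge capacities are calibrated so that each edge can carry one symbol per time step; the total vertex count is $\Theta(Nk)$, matching the claim.

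For the coding rate lower bound I use the following scheme: every source $s_i$ broadcasts $x_i$ on its $N$ outgoing edges; by linearity of $\C$, each $v_j$ locally forms the symbol $\C(\x)_j=\sum_i c_{ij}x_i$ from its $k$ inputs and forwards it to its incident sinks; each sink $t_{i,r}$ reads the $q$ symbols on its incoming edges and, since perfect smoothness guarantees decoding on the uncorrupted codeword, recovers $x_i$ exactly via $\A$. Accounting for edge loads on the time-expanded graph yields throughput $\Omega(1)$ per session.

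For the Steiner packing upper bound I attack the LP dual (cf.\ \Cref{app: Steiner packing}). The key observation is that every Steiner tree for session $i$ must select a set $S\subseteq[N]$ of middle nodes whose attached sinks cover all of $T_i$; by perfect smoothness each middle node lies in the neighborhood of a $q/N$ fraction of $T_i$, so by a standard covering argument $|S|=\Omega((N/q)\log|T_i|)=\Omega((N/q)\log k)$. The $(s_i,v_j)$-edges out of $s_i$ have total capacity $N$, so the fractional packing pushes $\tau\cdot|S|$ units of flow along them, forcing $\tau=O(q/\log k)$ per session. The $\delta$ factor enters through a density refinement coming from the decoding radius: because $\C$ decodes from any $\delta$-fraction of corrupted symbols, a suitable dual edge weighting (scaled by $\delta$) certifies that every session must claim a $\delta$-share of the outgoing capacity at each middle node it uses, tightening the bound to $O(q/(\delta\log k))$ and yielding the claimed gap $\Omega(\delta\log k / q)$.

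The hard part will be the density refinement that produces the $\delta$ factor: $\delta$ is a combinatorial resilience-to-corruption statement about $\C$, and converting it into an LP-dual capacity-sharing statement requires a quantitative use of both smoothness and linearity (roughly, that the query distributions from different sinks form a $\delta$-dense incidence structure on the middle layer that no Steiner packing can dodge). Once this density lemma is combined with the covering bound on $|S|$ and the $\Omega(1)$ coding scheme above, the gap $\Omega(\delta\log k / q)$ follows, and substituting matching-vector LDC parameters into $n=\Theta(Nk)$ recovers the $n$-dependent bounds in \Cref{thm: coding gap}.
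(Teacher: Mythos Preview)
Your proposal has a genuine gap at the Steiner-packing step, and the missing idea is precisely where the $\log k$ factor comes from.

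The ``standard covering argument'' you invoke does not yield $|S|=\Omega((N/q)\log|T_i|)$. A set-cover \emph{lower} bound is just $|T_i|$ divided by the maximum number of sinks any single middle node touches; the $\log$ factor is an artifact of the greedy \emph{upper} bound, not a lower bound on the optimum cover. Worse, even granting a bound on $|S|$, your capacity-counting step fails: a Steiner tree for session $i$ need not use $|S|$ edges out of $s_i$ --- it can touch one edge $(s_i,v_{j_1})$, then walk $v_{j_1}\to s_{i'}\to v_{j_2}\to s_{i''}\to\cdots$ through \emph{other} sources (each of which is also adjacent to all of $B$), reaching arbitrarily many middle nodes while charging only $O(1)$ to the cut out of $s_i$. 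So neither the $\log k$ nor the accounting against the $(s_i,v_j)$ edges goes through.

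The paper obtains the $\log k$ from a completely different mechanism: it co-locates all sources at one vertex $S$, takes $F$ to be the $N$ edges $S\!-\!B$, and then engineers the graph so that \emph{within each} $R_i$ the sinks are pairwise at distance $\Omega(\log k)$ in $G\setminus F$. This distance forces every Steiner tree on $S_i$ either to use many cut edges or to contain $\Omega(|R_i|\cdot\log k)$ non-cut edges, which is exactly the dual assignment $y_e=1$ on $F$ and $y_e=2/b$ elsewhere in \Cref{lem: gap instance}. Achieving that distance is the real work: one replaces the high-degree bipartite edges by complete binary trees whose leaves are labeled by a \emph{random permutation} of $[k]$, so that from any sink of session $i$ a ball of radius $\tfrac14\log k$ in $G\setminus F$ is unlikely to hit another label-$i$ leaf; a pruning pass then removes the $o(1)$ fraction of violating sinks. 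The $\delta$ factor is not a ``density refinement'' of the dual at all --- it enters much earlier and more simply: by \Cref{fact: ldc decode} one can extract $\Theta(\delta N/q)$ \emph{disjoint} decoding $q$-tuples for each $i$, and that is precisely the number of sinks $|R_i|$ you create. Plugging $r=\Theta(\delta Nk/q)$, $f=N$, $m=\Theta(Nk)$, $b=\Theta(\log k)$ into \Cref{lem: gap instance} gives the $\Omega(\delta\log k/q)$ gap directly.
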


This lemma allows us to establish network coding gaps using basic LDCs. For instance, the Hadamard code is a linear $(2,\frac14,\frac12)$-locally decodable code with perfect smoothness, which immediately implies network coding gaps of $\Omega(\log k)$ and $\Omega(\log\log n)$. To achieve a stronger lower bound in terms of $n$, an LDC with a better rate is required.  

In the sub-logarithmic query regime, the best rate is achieved by matching vector (MV) codes \cite{efremenko20093}, formally stated in \Cref{lem: main mv code}, with the proof deferred to \Cref{app: mv code}. Notably, \Cref{lem: main mv code} differs from previous results in several ways. First, the MV codes in \cite{efremenko20093} were analyzed in the constant query regime using Grolmusz's set systems, which become suboptimal for a larger number of queries. Additionally, the MV codes in \cite{dvir2011matching,yekhanin2012locally} require both the distance $\delta$ and error $\epsilon$ to be constant, which slightly increases the query complexity to $t^{O(t)}$. Here, we provide a version optimized for our setting.

\begin{lemma}[Matching Vector Codes -- optimized for network coding gap constructions]
\label{lem: main mv code}
    For $t,k\ge 2$, there exists a linear $q$-query locally decodable code over $\F = GF(2^b)$ with $q = 2^t,b=\Tilde{O}(t)$ that encodes message of length $k$ to codeword of length
    \[
        \exp\exp\left(O\left((\log k)^{1/t}(\log\log k)^{1-1/t}\cdot t\ln t\right)\right).
    \]
    Moreover, the code is $(q,\delta,q\delta)$-locally decodable for any $\delta < 1/q$, and satisfies perfect smoothness.
\end{lemma}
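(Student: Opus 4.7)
The plan is to instantiate the matching-vector (MV) code construction due to Yekhanin and Efremenko, and to tune the modulus, field, and MV-family parameters for the slightly unusual regime we need---namely, large query complexity $q=2^t$ with decoding radius $\delta=\Theta(1/q)$ and correspondingly small error. Recall that an $S$-matching vector family of size $k$ in $\Z_m^n$ consists of sequences $u_1,\dots,u_k$ and $v_1,\dots,v_k$ in $\Z_m^n$ with $\langle u_i,v_i\rangle=0$ in $\Z_m$ and $\langle u_i,v_j\rangle\in S\subseteq\Z_m\setminus\{0\}$ for all $i\ne j$. Given such a family and a primitive $m$-th root of unity $\gamma$ in $\F=GF(2^b)$, define the encoding
\[
    \C(\x)_w \;=\; \sum_{i=1}^k \x_i\,\gamma^{\langle u_i,w\rangle},\qquad w\in\Z_m^n,
\]
yielding a linear code of length $N=m^n$ over $\F$.

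For decoding $\x_i$, I would pick $w\in\Z_m^n$ uniformly at random and query the coordinates $w+\alpha v_i$ for $\alpha$ in a fixed set $T\subseteq\Z_m$. Expanding
\[
    \C(\x)_{w+\alpha v_i} \;=\; \sum_j \x_j\,\gamma^{\langle u_j,w\rangle}\,\gamma^{\alpha\langle u_j,v_i\rangle},
\]
the $j=i$ contributions all carry $\gamma^{\alpha\cdot 0}=1$, while the $j\ne i$ terms carry exponents indexed by elements of $S$. Choosing coefficients $\{c_\alpha\}_{\alpha\in T}$ that solve the Vandermonde-type system $\sum_\alpha c_\alpha=1$ and $\sum_\alpha c_\alpha\gamma^{\alpha s}=0$ for every $s\in S$---feasible whenever $|T|\ge |S|+1$---extracts $\x_i$ exactly from uncorrupted queries. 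Since $w+\alpha v_i$ is uniform over $\Z_m^n$ whenever $w$ is, the decoder is perfectly smooth; \Cref{fact: smooth code imply ldc} then upgrades smoothness into $(|T|,\delta,|T|\delta)$-local decodability for every $\delta<1/|T|$.

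To realize the claimed parameters, I would take $m=p_1p_2\cdots p_t$ to be a product of $t$ distinct small primes and let $S=\{s\in\Z_m\setminus\{0\}\mid s\equiv 0\pmod{p_i}\text{ for some }i\}$. By the Chinese remainder theorem $|S|=2^t-1$, so $q=|T|=|S|+1=2^t$ queries suffice. Choosing the $p_i$ of magnitude $O(t\log t)$ by the prime number theorem, and using standard number-theoretic bounds on the multiplicative order of $2$ modulo those primes (selecting primes that already divide $2^{\tilde O(t)}-1$), one obtains $b=\tilde O(t)$ so that $\F=GF(2^b)$ contains an $m$-th root of unity. A Grolmusz-style combinatorial construction (set systems with restricted modular intersections) furnishes an MV family of the required size $k$ in $\Z_m^n$, and a direct parameter count of $\log N=n\log m$ then yields
\[
    N \;=\; \exp\exp\!\Bigl(O\bigl((\log k)^{1/t}(\log\log k)^{1-1/t}\cdot t\ln t\bigr)\Bigr).
\]

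The main obstacle is parameter bookkeeping in two places. First, on the number-theoretic side one must select the primes $p_i$ carefully so that $m$ already divides $2^b-1$ for some $b=\tilde O(t)$, rather than the naive $b=\mathrm{lcm}_i\mathrm{ord}_{p_i}(2)$ one would get from arbitrary primes of the right size. Second, on the combinatorial side, Grolmusz's lower bound on MV-family sizes must be made quantitatively precise in the regime where $t$ grows with $k$, rather than being a fixed constant as in the original applications. The novelty relative to \cite{efremenko20093,dvir2011matching,yekhanin2012locally} is precisely this growing-$t$ regime with $\delta,\epsilon$ permitted to scale as $1/q$: this lets us keep the query complexity at exactly $2^t$ and avoid the $t^{O(t)}$ blow-up that appears when one insists on constant decoding radius and error, which is what the reduction in \Cref{lem: LDC to NC} needs to translate MV-length bounds into gap bounds against $n$.
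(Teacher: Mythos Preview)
Your framework matches the paper's---MV code with modulus $m=\prod_{i=1}^t p_i$, encoding $\C(\x)_w=\sum_i\x_i\gamma^{\langle u_i,w\rangle}$, decoding along a uniformly random line via an $S$-decoding polynomial, perfect smoothness from the uniform shift---but there is a concrete error and a substantial omission.

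The error: your set $S=\{s\ne 0:p_i\mid s\text{ for some }i\}$ does not have size $2^t-1$. Already for $t=2$, $m=15$ it contains six elements ($3,5,6,9,10,12$), not three, and in general its size grows with $m$, so the query count $|S|+1$ would explode. The canonical set that the matching-vector constructions actually land in, and that the paper uses, is $\{s\ne 0:s\bmod p_i\in\{0,1\}\text{ for every }i\}$; \emph{that} set has size $2^t-1$ by CRT, and this is what pins $q=2^t$.

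The omission: the paper warns just before the lemma that the Grolmusz analysis from \cite{efremenko20093} becomes suboptimal once $t$ grows, and instead invokes the quantitative MV-family bound from \cite{dvir2011matching} (\Cref{lem: mv family}): a family of size $\binom{h}{w}$ in $\Z_m^n$ with $n=\binom{h}{\le d}$ and $d=\max_ip_i^{e_i}$. The entire content of the proof is the parameter computation you dismiss as ``direct'': take $p_i\in[(c/2)t\ln t,\,ct\ln t]$ via a strengthened Bertrand postulate, set $k=w^{w/t}$ so that $w=\Theta(t\log k/\log\log k)$, take $h=c'w^{1+1/t}$, bound $\binom{h}{\le d}\le d(eh/d)^d$, and push through $N=m^n$. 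This is precisely where the exponents $1/t$, $1-1/t$ and the factor $t\ln t$ come from; without carrying it out you have not established the stated length. As for the field, the paper does not select primes to control $b$---it simply takes $b$ to be the multiplicative order of $2$ modulo $m$ and uses the trivial bound $b\le m-1$, since the field size is immaterial to the downstream graph-size bound.
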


\begin{proof}[Proof of \Cref{thm: coding gap}]
   By applying \Cref{lem: LDC to NC} and the Hadamard code, which is a $(2,\frac14,\frac12)$-locally decodable code, we immediately obtain the lower bound $\Omega(\log k)$.

    To derive lower bounds in terms of $n$, we set $t = \frac14 \log\log k$ and $\delta = \frac1{4q}$ in \Cref{lem: main mv code}, yielding a $\left((\log k)^{\frac14},\frac14(\log k)^{-\frac14}, \frac14\right)$-locally decodable code. This encodes a message of length $k$ into codewords of length
    \[
        N = \exp\exp(O((\log\log k)^2\log\log\log k)).
    \]
    Rearranging terms, we obtain
    \[
        \log k = 2^{\Omega\left(\sqrt{\log\log N/\log\log\log N}\right)}.
    \]
    
    Using this LDC and \Cref{lem: LDC to NC}, we derive a network coding gap of $\Omega(\sqrt{\log k})$, which translates to $2^{\Omega\left(\sqrt{\log\log n/\log\log\log n}\right)}$ in terms of $n$, since $n$ is polynomially related to $N$, and the relation between $N$ and $k$ follows from our previous derivation.
\end{proof}

In \Cref{sec: robust dist ldc}, we propose a relaxed variant of LDCs tailored for applications in network coding gaps, as our approach does not require their full error-correcting capabilities. We believe this variant will be useful in closing the gap between the lower bound in \Cref{thm: coding gap} and the $O(\log n)$ upper bound.


\paragraph{Upper Bound.} We introduce a generalized notion of the sparsest cut in the multi-Steiner setting, defined as the objective of the integral solution to the dual LP of the multi-Steiner packing problem (see \Cref{app: Steiner packing}). This quantity also serves as an upper bound on the network coding throughput, which we formally prove under the information-theoretic definition of network coding. This connection reduces the problem of bounding the network coding throughput to analyzing the LP integrality gap of the multi-Steiner packing problem. Leveraging the $O(\log n)$-approximate cut-tree packing developed for oblivious routing \cite{racke2008optimal}, we obtain an $O(\log n)$ upper bound on the network coding gap. This result can thus be interpreted more strongly: the gap between network coding throughput and oblivious Steiner packing is at most $O(\log n)$.

\paragraph{Presentation Overview} 
In \Cref{sec: undirected lb}, we demonstrate how to construct coding gap instances using LDCs. This section is divided into two parts: in \Cref{sec: sampling ldcs}, we present a simple sampling-based construction to illustrate how LDCs can be applied to network coding; and in \Cref{sec: boosting}, we prove the main lemma by amplifying the coding gap using random binary trees. Although the sampling construction in \Cref{sec: sampling ldcs} is suboptimal, it has the merits of simplicity and motivates a new formulation of LDCs, which we introduce in \Cref{sec: robust dist ldc}. We briefly discuss the limitation of existing LDCs to network coding in \Cref{sec: ldc limit}. In \Cref{sec: ub}, we establish an upper bound on the network coding gap.


\section{Gap Instances via Locally Decodable Codes}

\label{sec: undirected lb}
A \emph{network coding gap instance} $I = (G, \ses, F)$ is a multi-source multicast instance $\M = (G, \ses)$ defined over a connected graph $G$, along with a designated set of cut edges $F \subseteq E(G)$. We restrict our attention to instances where all edge capacities and demands are unit-valued, i.e., $c_e = 1$ for all $e \in E(G)$ and $d_i = 1$ for all $i \in [k]$. Let $\dist_G(u, v)$ denote the shortest-path distance between nodes $u$ and $v$ in $G$. We say that a gap instance has parameters $(a, b, f, m, r)$ if the following conditions hold:

\begin{enumerate}
    \item The $k$-source multicast instance $\M$ admits a network coding solution achieving throughput $a$.
    \item Removing the edges in $F$ disconnects $s_i$ from $R_i$ for every $i \in [k]$.
    \item Let $\dist_{G \setminus F}(U) := \min_{\substack{u,v \in U \\ u \ne v}} \dist_{G \setminus F}(u, v)$. Then for all $i \in [k]$, it holds that $\dist_{G \setminus F}(R_i) \ge b$.
    \item The number of cut edges is $f = |F|$.
    \item The total number of sinks is $r = \sum_{i \in [k]} |R_i|$.
    \item The graph $G$ contains at most $m$ edges, i.e., $|E(G)| \le m$.
\end{enumerate}

\begin{lemma}
\label{lem: gap instance}
Let $I$ be a gap instance with parameters $(a, b, f, m, r)$. Then the network coding gap is at least $\frac{a \cdot r}{f + 2m/b}$.
\end{lemma}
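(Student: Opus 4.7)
The plan is to bound the multi-Steiner packing number $\tau$ by $(f + 2m/b)/r$; since the instance admits a network coding solution of throughput at least $a$, this will immediately give a coding gap of $a/\tau \ge ar/(f + 2m/b)$. I would proceed via an LP-duality style weighting argument on edges, so the issue reduces to exhibiting a single good dual certificate that every Steiner tree must pay for.

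The certificate I would use is the two-tier weighting $y_e := 1$ if $e \in F$ and $y_e := 2/b$ if $e \notin F$, with total weight $\sum_e y_e = f + 2(m-f)/b \le f + 2m/b$. The core claim is then: for every Steiner tree $T$ spanning $S_i = R_i \cup \{s_i\}$,
\[
\sum_{e \in T} y_e \;\ge\; |R_i|.
\]
Granted this, for any multi-Steiner packing $\{(T_i^j, w_i^j)\}$ of rate $\tau$, swapping summation order and using the edge-capacity constraints yields
\[
\tau\cdot r \;=\; \sum_i |R_i|\sum_j w_i^j \;\le\; \sum_{i,j} w_i^j \sum_{e\in T_i^j} y_e \;=\; \sum_e y_e \sum_{i,j:\, e\in T_i^j} w_i^j \;\le\; \sum_e y_e \;\le\; f + 2m/b,
\]
which is exactly the bound on $\tau$ we want.

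To prove the core claim, set $q := |T \cap F|$ and remove the $q$ cut edges of $T$, breaking $T$ into $q+1$ subtrees. By property~2 of a gap instance, the subtree containing $s_i$ has no sinks, so the remaining (at most $q$) subtrees $C_1,\ldots,C_q$ cover all of $R_i$; let $n_k$ be the number of sinks in $C_k$, so $\sum_k n_k = |R_i|$. The geometric crux is the inequality $e(C_k) \ge (n_k-1)\,b/2$, proved by a DFS walk on the subtree $C_k$: the walk has total length $2 e(C_k)$ (each tree edge traversed twice), and between the first visits to any two consecutive sinks the walk spends at least the $C_k$-tree distance between them, which in turn is at least their distance in $G\setminus F$, hence at least $b$ by property~3; summing over $n_k - 1$ consecutive pairs gives $2e(C_k) \ge (n_k - 1)b$. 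Summing over $k$, $|T| - q \ge \max\!\bigl(0,(|R_i|-q)\,b/2\bigr)$, and therefore
\[
\sum_{e\in T} y_e \;=\; q + (|T|-q)\cdot \tfrac{2}{b} \;\ge\; q + \max\!\bigl(0,\,|R_i|-q\bigr) \;=\; \max(q,|R_i|) \;\ge\; |R_i|.
\]

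The main obstacle I expect is the geometric inequality $e(C_k) \ge (n_k-1)b/2$: it is the only place where property~3 (pairwise sink distance $\ge b$ in $G\setminus F$) is converted into an edge-count bound, and the DFS bookkeeping must play correctly against the case split on $q$ vs. $|R_i|$. Everything else is routine LP duality and summation.
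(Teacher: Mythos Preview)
Your proof is correct and matches the paper at the structural level: both use LP duality with the identical two-tier weighting $y_e = 1$ on $F$ and $y_e = 2/b$ off $F$, and both reduce to the core claim $\sum_{e \in T} y_e \ge |R_i|$ for every Steiner tree $T$ spanning $S_i$.

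Where you diverge is in establishing that core claim. The paper classifies each sink as \emph{close} (within distance $b/2$ in $G\setminus F$ of the disconnected endpoint of some cut edge in $T$) or \emph{remote}; at most $|T\cap F|$ sinks can be close, and around each remote sink it packs a disjoint $(b/2)$-ball inside $T\setminus F$, giving at least $(|R_i|-|T\cap F|)\cdot b/2$ non-cut edges. You instead delete the cut edges, split $T$ into subtrees, and run an Euler-tour argument on each subtree to get $e(C_k)\ge (n_k-1)b/2$. Both routes land on precisely the same arithmetic. Your decomposition is arguably tidier---no need to orient cut edges or argue that a remote sink's ball actually has $b/2$ edges---while the paper's ball-packing is a bit more geometric. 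Neither approach dominates the other; they are interchangeable here.
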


\begin{proof}
Let $\tau$ denote the non-coding throughput, characterized by the multi-Steiner packing number. By the LP duality in \Cref{app: Steiner packing}, the dual objective is an upper bound on $\tau$. Now we shall assign dual variables: let $z_i = |R_i| + 1$ for every demand $i$; $y_e = 1$ for every edge $e\in F$ and $y_e = 2/b$ otherwise. We show that the dual constraints are satisfied. Clearly, all the variables are non-negative. Recall that we define $S_i := R_i\cup \{s_i\}$. For any $i$ and tree $T$ such that $S_i\subseteq V(T)$, we need to show that the constraint $\sum_{e\in T} y_e \ge z_i$ is satisfied. Let $r_T := |R_i|$ be the number of sinks spanned by $T$, and let $x_T$ denote the number of cut edges $F$ used by $T$.

By the definition of a gap instance, in $G \setminus F$, any two sinks in $R_i$ are at distance at least $b$. We say a sink $t \in R_i$ is \emph{close} to a cut edge $e = (u,v) \in T \cap F$ if, in $G \setminus F$, the vertex $u$ is connected to $s_i$, $v$ is not, and $\dist_{G \setminus F}(t, v) < b/2$. By the distance condition of the gap instance, at most one sink in $R_i$ can be close to any such cut edge. Therefore, the number of close sinks in $T$ is at most $x_T$. 

For each one of the remaining sinks $s$, let $N(s,b/2)$ be the edges in $(b/2)$-neighborhood of $s$ in $T\setminus F$. Since $s$ is remote, we have $|N(s,b/2)|\ge b/2$. In addition, for each two distinct $s_1\neq s_2$ we have $\dist_{T\setminus F}(s_1,s_2)\ge\dist_{G\setminus F}(s_1,s_2) \ge b$, and thus $N(s_1,b/2)\cap N(s_2,b/2)=\emptyset$.
 Therefore, the total number of non-cut edges in $T$ is at least
 $(r_T-x_T)\cdot \frac{b}{2}$. Thereby the dual constraint is satisfied, 
 \[
    \sum_{e\in T} y_e \ge \frac{2}{b}\cdot (r_T-x_T) \frac{b}{2} + 1\cdot x_T = r_T = z_i.
 \]

With this assignment, the dual objective is,
\[
\frac{\sum_{e\in E} c_e y_e }{ \sum_i d_i z_i} \le \frac{f + 2m/b}{r}
\]
which upper bounds the packing number $\tau$.
Given that the coding throughput is $a$, the network coding gap is at least
\[
\frac{a}{\tau} \ge \frac{a \cdot r}{f + 2m/b}.
\]
\end{proof}

\subsection{Sampling Construction via LDCs}

\label{sec: sampling ldcs}

Fix a perfectly smooth $(q,\delta,q\delta)$ code $\C : \F^K \to \F^N$, we will show that the following graph construction is a $(a,b,f,m,r)$ gap instance for $a=1, b=\log k/ \log\log k, f=N, m=\Theta(rq)$, and $r=\Theta((\delta N\log k)/q)$. We only consider gap instances where all sources are co-located at the same vertex $S$. Our construction is based on a tripartite graph $G = (A, B, C, E)$, structured as follows:

\begin{itemize}
    \item The first part $A$ contains only the common source vertex $S$, i.e. $s_i = S$ for all $i\in [k]$. In a network coding solution, the message of $s_i$ is represented as $\x_i\in \F$. Thereby, vertex $S$ knows the entire message vector $\x\in \F^k$.
    \item The second part $B = \{w_j\}_{j=1}^N$ consists of $N$ vertices and each is connected to $S$ by an edge. In the network coding solution, each vertex $w_j\in B$ will receive from $S$ one coordinate of the codeword $\C(\x)_j$. 
    \item The third part $C$ is the disjoint union of sink sets: $C = \bigsqcup_{i=1}^k R_i$, where $R_i \cap R_j = \emptyset$ for all $i \ne j$. Each sink in $R_i$ connects to $q$ vertices in $B$ that enable decoding of the message $\x_i$, using the local decodability of the code. In the remainder of this subsection, we describe in detail how to construct the third part.
\end{itemize}

We begin by identifying the property that the third part should satisfy. To apply \Cref{lem: gap instance}, we set the cut edges $F$ to be all edges adjacent to $S$, and aim to ensure that the pairwise distances between sinks in each $R_i$ are not too small in $G \setminus F$. This is achieved by first generating a near-linear number of candidate sinks for each $i$, then retaining only a $(\log k)/k$ fraction via sampling. A pruning subroutine is applied to eliminate sinks that are too close to others associated with the same source. Both subroutines are formalized below. The pseudocode is shown in \Cref{alg: gap instance LDC}.

\paragraph{Sampling Subroutine} For $q$-query LDCs, each vertex in $R_i$ must have $q$ neighbors in the second layer $B$ that can successfully decode the message $\x_i$. The code's decoding radius is used to generate many disjoint $q$-queries of near-linear size. Specifically, we maintain a set of used indices $U \subseteq [N]$. As long as $|U| \le \delta N$, there exists a set $D \subseteq [N] \setminus U$ of size $q$ such that $\x_i$ can be decoded by querying $\C(\x)$ at coordinates in $D$ (see \Cref{fact: ldc decode} and line~\ref{line: decode} in \Cref{alg: gap instance LDC}). Denote by $\Call{Decode}{i, U}$ the oracle that returns the corresponding set~$D$. All indices in $D$ are added to $U$.

With probability $\log k/k$, we add a vertex $v$ to the sinks $R_i$ and add edges from $v$ to the vertices in $B$ indexed by $D$. Note that $D$ is added to $U$ regardless of whether $v$ is sampled. The process terminates when $|U| > \delta N$. The sampling subroutine runs from line~\ref{line: sampling begin} to line~\ref{line: sampling end} in \Cref{alg: gap instance LDC}.

\begin{fact}
\label{fact: ldc decode}
Let $C : \F^k \rightarrow \F^N$ be a $(q, \delta, q\delta)$-locally decodable code with perfect smoothness. Then for any index $i \in [k]$ and any subset $U \subseteq [N]$ with $|U| \le \delta N$, there exists a set $D \subseteq [N] \setminus U$ of size at most $q$ such that the message symbol $\x_i$ can be successfully decoded by querying the codeword $\C(\x)$ at positions in $D$.
\end{fact}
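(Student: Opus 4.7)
The plan is to exhibit the query set $D$ as the realized query set of the perfectly smooth decoder under some good fixing of its internal randomness. Concretely, fix the index $i \in [k]$ and the set $U \subseteq [N]$ with $|U| \le \delta N$. For a choice $r$ of internal randomness, let $Q(i,r) \subseteq [N]$ denote the (at most $q$) positions that $\A(i)$ reads on randomness $r$; the candidate for $D$ will be $Q(i, r^*)$ for a carefully chosen $r^*$ that avoids $U$.

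The main step is a one-line union bound using the smoothness hypothesis. By perfect smoothness, for uniformly random $r$, each individual query of $\A(i)$ is marginally uniform over $[N]$, so
\[
    \Pr_r\bigl[\, Q(i,r) \cap U \ne \emptyset \,\bigr] \;\le\; q \cdot \frac{|U|}{N} \;\le\; q\delta.
\]
Because the code is $(q,\delta,q\delta)$-locally decodable in a nontrivial way, $q\delta < 1$ (as in \Cref{fact: smooth code imply ldc}), so the displayed probability is strictly less than $1$, and there exists some randomness $r^*$ with $Q(i, r^*) \cap U = \emptyset$. Set $D := Q(i, r^*)$, which by construction satisfies $|D| \le q$ and $D \subseteq [N] \setminus U$.

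It remains to verify that $D$ actually decodes $\x_i$. Here I would invoke the second half of perfect smoothness, which states that $\A^{\C(\x)}(i) = \x_i$ holds for every realization of the decoder's randomness, not merely with high probability. In particular, $\A^{\C(\x)}(i; r^*) = \x_i$, and this computation only inspects the coordinates $Q(i, r^*) = D$ of $\C(\x)$. Therefore querying $\C(\x)$ at $D$ suffices to recover $\x_i$, completing the argument.

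There is no real obstacle: the proof is a union bound plus an appeal to the deterministic-correctness half of perfect smoothness. The one thing worth flagging is that the argument genuinely needs the strictness $q\delta < 1$ (so the union bound leaves positive probability of a good $r^*$); this is consistent with the regime in which \Cref{fact: smooth code imply ldc} is invoked, and in the downstream applications $\delta$ is always taken to be a small constant fraction of $1/q$.
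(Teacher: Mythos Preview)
Your proposal is correct and follows essentially the same argument as the paper: a union bound using the uniform marginal of each query to show that with probability at least $1 - q\delta > 0$ all queries avoid $U$, followed by the probabilistic method to fix a good randomness $r^*$, and finally an appeal to the deterministic-correctness clause of perfect smoothness to conclude that the resulting query set $D$ decodes $\x_i$. Your write-up is in fact slightly more careful than the paper's in making explicit the role of the strict inequality $q\delta < 1$.
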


\begin{proof}
By the perfect smoothness property, each of the $q$ queries made by the decoder is uniformly distributed over $[N]$, so the probability that any single query falls in the corrupted set $U$ is at most $\frac{|U|}{N} \le \delta$. By the union bound, the probability that any of the $q$ queries falls in $U$ is at most $q\delta$. Therefore, with probability at least $1 - q\delta$, all $q$ queries lie entirely within the uncorrupted set $\overline{U}$.

By the probabilistic method, this implies the existence of a fixed set $D \subseteq \overline{U}$ of size $q$ such that the decoder queries exactly the positions in $D$ with some fixed randomness. Since all queried positions are uncorrupted and the decoder behaves correctly on valid codewords by the perfect smoothness assumption, $\x_i$ can be deterministically decoded from the values of $\C(\x)$ at positions in $D$.
\end{proof}

\begin{algorithm}[htbp]
\caption{Generating Gap Instance via LDCs}
\label{alg: gap instance LDC}
\begin{algorithmic}[1]
\State Set $A \gets \{S\}$ and $s_i \gets S$ for all $i \in [k]$.
\State Set $B \gets \{w_i\}_{i=1}^N$.
\State Set $F \gets \{(S, w_i) \mid i \in [N]\}$ and initialize $E \gets F$.
\State Initialize $R_i \gets \emptyset$ for all $i \in [k]$.
\For{\textbf{each} $i = 1$ to $k$} \Comment{Sampling Subroutine.} \label{line: sampling begin}
    \State Initialize $U \gets \emptyset$.
    \While{$|U| \le \delta N$}
        \State Let $D \gets \Call{Decode}{i, U}$. \Comment{Select $q$ coordinates in $[N]\setminus U$ for decoding.} \label{line: decode}
        \State \textbf{with probability} $(\log k)/k$:
        \State \hspace{1em} Add a new vertex $v$ to $R_i$.
        \State \hspace{1em} Add edges $(w_j, v)$ to $E$ for each $j \in D$.
        \State Update $U \gets U \cup D$.
    \EndWhile
\EndFor \label{line: sampling end}
\For{\textbf{each} $i \in [k]$ and \textbf{each} $u \in R_i$} \Comment{Pruning Subroutine.} \label{line: pruning begin}
    \If{there exists $v \in R_i \setminus \{u\}$ such that $\dist_{G \setminus F}(u, v) \le \log k / \log\log k$} \label{line: pruning cond}
        \State Remove $v$ from $R_i$.
    \EndIf
\EndFor \label{line: pruning end}
\State Set $C \gets \bigcup_{i=1}^k R_i$.
\end{algorithmic}
\end{algorithm}

\paragraph{Pruning Subroutine}  
We prune the set of sinks $C$ to ensure that, within each $R_i$, the sinks are sufficiently far apart in the graph $G \setminus F$. Specifically, for each $i \in [k]$ and each sink $u \in R_i$, if there exists another sink $v \in R_i$ such that the distance between $u$ and $v$ in $G \setminus F$ is less than $\log k / \log\log k$, we remove $u$ from $R_i$. 

The pruning threshold is chosen to ensure that only a small number of sinks are removed from $C$, as formally shown in \Cref{lem: prune LDC instance}. After pruning, we apply \Cref{lem: gap instance} to establish a lower bound on the coding gap.

\begin{lemma}
    \label{lem: prune LDC instance}
    Assume that $q = o(\log k)$. Then, in the pruning subroutine of \Cref{alg: gap instance LDC}, an $o_k(1)$\footnote{That is, a quantity tending to $0$ as $k \to \infty$.} fraction of the sinks in $C$ is removed with high probability.
\end{lemma}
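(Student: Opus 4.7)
The plan is to bound the expected number of pruned sinks via a walk‑counting argument on the ``ghost'' bipartite graph where every candidate is treated as present, and then upgrade to a high‑probability statement via Markov and Chernoff. The key observation is that Algorithm~\ref{alg: gap instance LDC} splits cleanly into a deterministic part and a random part: because the accumulator $U$ advances regardless of whether a candidate is sampled, the list of candidate sinks together with their $q$‑element query sets $D_i^1, D_i^2, \ldots$ for each source $i$ is a fixed function of $\C$, and only the Bernoulli($p=\log k/k$) sampling is random. Let $\tilde G$ be the full bipartite graph between $B$ and all candidate sinks, with edges given by the $D$'s. Then $G\setminus F$ is obtained from $\tilde G\setminus F$ by retaining only the sampled candidate sinks, so every walk in $G\setminus F$ is in particular a walk in $\tilde G\setminus F$ whose intermediate sinks are all sampled.

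Set $\ell := \log k/\log\log k$. If a sampled $u\in R_i$ is pruned because of some $v\in R_i\setminus\{u\}$, then there is a walk of length $2j$ in $\tilde G\setminus F$ from $u$ to $v$ for some $2\le j\le \ell/2$, and its $j-1$ intermediate sinks together with the endpoint $v$ are all sampled. (The case $j=1$ cannot occur because the $D$‑sets of distinct candidates in $R_i$ are disjoint, so no two candidates in $R_i$ share a $B$‑neighbor.) I would count such walks carefully, using within‑source disjointness twice: each $C$‑to‑$B$ step contributes $q$ options, each $B$‑to‑$C$ step at most $k$ options (at most one candidate per source through the given $B$‑vertex), but the \emph{terminal} $B$‑to‑$C$ step, being restricted to $R_i$, contributes at most $1$ option. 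This gives at most $q^j k^{j-1}$ length‑$2j$ walks from $u$ to $R_i$, and each exists in $G\setminus F$ with probability at most $p^j=(\log k/k)^j$. A union bound yields
\[
    \Pr[u\text{ pruned}\mid u\text{ sampled}] \;\le\; \sum_{j=2}^{\ell/2} q^j k^{j-1}\Bigl(\tfrac{\log k}{k}\Bigr)^{j} \;\le\; 2\,(q\log k)^{\ell/2}/k \;=\; 2\,k^{-(\log\log k - \log q)/(2\log\log k)}.
\]
Since $q=o(\log k)$ forces $\log\log k-\log q\to\infty$ while $\log k/\log\log k\to\infty$, this bound -- call it $\eta(k)$ -- is $o_k(1)$.

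Summing over all candidates gives $\mathbb{E}[\text{pruned}]\le \eta(k)\cdot\mathbb{E}[|C|]$, where $\mathbb{E}[|C|]=\delta N\log k/q$. Chernoff on $|C|$ (a sum of independent Bernoullis whose mean tends to infinity) gives $|C|\ge \tfrac12\mathbb{E}[|C|]$ whp; Markov on the nonnegative random variable ``number of pruned sinks'' with threshold $\sqrt{\eta(k)}\cdot\mathbb{E}[|C|]$ gives that the pruned count is at most $\sqrt{\eta(k)}\cdot\mathbb{E}[|C|]$ with probability $1-\sqrt{\eta(k)}$. Combining, the pruned fraction is $O(\sqrt{\eta(k)})=o_k(1)$ whp, as claimed.

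The main obstacle is obtaining a walk count tight enough at the last step: the naive bound of $(qk)^j$ walks would yield only $(q\log k)^j$, which does not vanish for any interesting range of $q$. The saving by a factor of $k$ at the terminal $B$‑to‑$C$ step -- using that the endpoint is forced into $R_i$ and is therefore uniquely determined by the penultimate $B$‑vertex via within‑source disjointness of the $D$‑sets -- is precisely what makes the pruning probability shrink like $k^{-\Omega(1/\log\log k)}$ and drives the whole argument.
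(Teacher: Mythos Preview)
Your argument is correct and follows essentially the same approach as the paper: both bound, for each sampled sink $u\in R_i$, the probability that another $R_i$-sink lies within distance $\log k/\log\log k$ by a branching argument exploiting the $\log k/k$ sampling rate together with the within-source disjointness of the $D$-sets (which yields the decisive factor-$k$ saving at the terminal step), and then finish with Markov. The paper phrases the branching via recurrences $\E[P_j]\le q\log k\cdot\E[P_{j-1}]$ and $\E[Q_j]\le\E[Q_{j-1}]+\tfrac{\log k}{k}\,\E[P_{j-1}]$ for the expected number of $B$- and $R_i$-vertices in growing balls around $u$, which unwinds to exactly your ghost-graph walk enumeration; one small caveat is that your claim ``each walk exists with probability at most $p^j$'' is only literally true for \emph{simple} paths (a walk that revisits a $C$-vertex survives with higher probability), but restricting the union bound to simple paths---whose count is still at most $q^jk^{j-1}$---fixes this immediately.
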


\begin{proof}
We analyze the pruning process in the graph $G \setminus F$. Fix any $i \in [k]$ and any sink $u \in R_i$. Let $P_j$ be the number of vertices in $B$ within distance $2j + 1$ from $u$. Initially, $P_0 = q$. For $j > 0$, we have the recurrence
\[
\E[P_j] \le q \log k \cdot \E[P_{j-1}],
\]
since each vertex in $C$ has $q$ neighbors in $B$, and each vertex in $B$ has at most $k$ potential neighbors in $C$, with each edge sampled independently with probability $\log k / k$.

Now let $Q_j$ be the number of vertices in $R_i \setminus \{u\}$ within distance $2j$ from $u$. Clearly, $Q_0 = 0$. For $j > 0$, the expectation $\E[Q_j]$ equals $\E[Q_{j-1}]$ plus the expected number of vertices in $R_i \setminus \{u\}$ that are at distance exactly $2j$ from $u$.
The latter quantity can be upper bounded by the expected number of vertices in $B$ at distance $2j - 1$ from $u$, multiplied by the probability that a vertex in $B$ is connected to a vertex in $R_i$.
This probability is at most $\frac{\log k}{k}$ as the sampling procedure, implying
\[
\E[Q_j] \le \E[Q_{j-1}] + \frac{\log k}{k} \cdot \E[P_{j-1}] \le \frac{(q \log k)^{j+1}}{k} ~.
\]

Let $d^* = \frac{1}{2} \cdot \frac{\log k}{\log \log k}$. The algorithm prunes $u$ if there exists another sink in $R_i$ within distance $2d^*$ in $G \setminus F$. Using the bound above, we get
\[
\E[Q_{d^*}] \le \frac{(q \log k)^{d^*}}{k} = o(1),
\]
and therefore
\[
\Pr[Q_{d^*} \ge 1] \le \E[Q_{d^*}] = o(1),
\]
which gives an upper bound on the probability that $u$ is pruned.

Since this holds for every $u \in R_i$ and all $i \in [k]$, the expected fraction of sinks pruned is $o(1)$. Applying Markov's inequality concludes the proof.
\end{proof}

We name this instance $I_S = (G,\ses,F)$. Applying \Cref{lem: gap instance} to our gap instance $I_S$, we obtain the following lemma.
\begin{lemma}
\label{lem: sampling nc gap}
Suppose $\C : \F^k \rightarrow \F^N$ is a $(q, \delta, q\delta)$-locally decodable code with perfect smoothness, as used in \Cref{alg: gap instance LDC}. Then the network coding gap for the resulting instance $I_S$ is at least
\[
\Omega\left( \min\left\{ \delta, \frac{1}{\log\log k} \right\} \cdot \frac{\log k}{q} \right).
\]
\end{lemma}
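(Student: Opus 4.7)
The plan is to instantiate \Cref{lem: gap instance} on the instance $I_S$ produced by \Cref{alg: gap instance LDC}: I will read off its five parameters $(a,b,f,m,r)$ and substitute them into the gap formula.

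First I would verify the coding throughput $a\ge 1$. The source $S$ holds the full message vector $\x\in\F^k$, so consider the scheme that sends $\C(\x)_j$ along the edge $(S,w_j)$ for every $j\in[N]$ and has each $w_j$ forward its received symbol to all of its neighbors in $C$. For every sink $v\in R_i$, the indices of its $q$ neighbors in $B$ form exactly a set $D$ produced by $\textsc{Decode}(i,U)$ during the iteration at which $v$ was sampled, and by \Cref{fact: ldc decode} combined with perfect smoothness, $\x_i$ is a deterministic function of $\{\C(\x)_j:j\in D\}$. Taking the common scale $c^*=\log|\F|$ gives $H(m_i)=c^*\cdot d_i$ for every source and entropy at most $c^*\cdot c_e$ on every edge, so $a\ge 1$.

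The remaining parameters are then easy to read off. Because $F$ consists of all edges incident to $S$, we have $f=|F|=N$. The pruning subroutine forces $\dist_{G\setminus F}(u,v)>\log k/\log\log k$ for every distinct pair in the same $R_i$, so $b=\Theta(\log k/\log\log k)$. For $r$, the inner while loop for each source consumes $q$ fresh coordinates of $[N]$ per iteration and halts once $|U|>\delta N$, so it runs $\Theta(\delta N/q)$ times; each iteration independently creates a sink with probability $\log k/k$, so the expected total number of sinks across all sources is $\Theta(\delta N\log k/q)$. A standard Chernoff bound concentrates this count, and \Cref{lem: prune LDC instance} ensures that only an $o(1)$ fraction is removed in the pruning step, yielding $r=\Theta(\delta N\log k/q)$ with high probability. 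Since every sink contributes exactly $q$ edges into $B$, the total edge count is
\[
m\;=\;N+q\,r\;=\;\Theta\!\left(N(1+\delta\log k)\right).
\]

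Substituting these into \Cref{lem: gap instance} gives $2m/b=\Theta(N\delta\log\log k)+o(N)$, hence $f+2m/b=\Theta(N(1+\delta\log\log k))$, and
\[
\text{gap}\;\ge\;\frac{a\cdot r}{f+2m/b}\;=\;\Omega\!\left(\frac{\delta\log k/q}{1+\delta\log\log k}\right)\;=\;\Omega\!\left(\min\left\{\delta,\frac{1}{\log\log k}\right\}\cdot\frac{\log k}{q}\right),
\]
where the last equality follows by splitting into the cases $\delta\log\log k\ge 1$ and $\delta\log\log k<1$.

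The main conceptual obstacle is justifying the pruning step in a way that is compatible with the dual-LP certificate of \Cref{lem: gap instance}: without a uniform distance lower bound between co-sourced sinks, that certificate collapses, and the assumption $q=o(\log k)$ is what makes the ball-growth argument of \Cref{lem: prune LDC instance} kill only an $o(1)$ fraction of the sampled sinks. Once that sparsity is in hand, the rest of the proof is bookkeeping inside the $(a,b,f,m,r)$ formula.
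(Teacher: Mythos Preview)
Your proposal is correct and follows essentially the same approach as the paper: verify $a\ge 1$ via the obvious coding scheme, read off $f=N$, $b=\Theta(\log k/\log\log k)$, $r=\Theta(\delta N\log k/q)$ from the construction together with \Cref{lem: prune LDC instance}, and substitute into \Cref{lem: gap instance}. Your edge-count bookkeeping $m=N+qr$ is in fact slightly more careful than the paper's (which writes $m=\Theta(rq)$ and implicitly relies on the $f$ term to absorb the $N$ source edges in the small-$\delta$ regime), but the final case split is identical.
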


\begin{proof}
For the network coding throughput, we consider each source message to be uniformly distributed over $\F$. Each edge in the network carries a symbol in $\F$, and each sink decodes its target symbol using $q$ symbols from the second layer. Each vertex in $B$ receives one coordinate of the locally decodable codeword, and each vertex in the third layer accesses $q$ such coordinates.

Each source has entropy $\log s$, where $s = |\F|$. Since all demands are unit-valued, we scale the capacities by $c^* = \log s$, as in the definition of network coding solution. Consequently, the effective capacity of each edge is $1 \cdot c^* = \log s$, since all edge capacities are unit-valued. Under this scaling, the network supports a coding solution that delivers the full message to each sink, and therefore the coding throughput of $I_S$ is at least $a = 1$.

We now analyze the remaining parameters required by \Cref{lem: gap instance} for $I_S$. By the sampling subroutine, the number of sinks $r = \Theta\left( \frac{\delta N}{q} \log k \right)$, which dominates the total number of vertices $n$. The total number of edges is $m = \Theta(rq)$. The pruning subroutine ensures that the minimum pairwise distance among sinks in each $R_i$ is at least $b \ge \log k / \log\log k$, and the number of cut edges is $f = N$.

Applying \Cref{lem: gap instance}, the coding gap is at least
\[
\frac{a \cdot r}{f + 2m / b} = \Omega\left( \min\left\{ \frac{r}{f}, \frac{r}{2m / b} \right\} \right).
\]
Substituting the estimates for $r$, $m$, and $b$ gives:
\[
\Omega\left( \min\left\{ \frac{\delta \log k}{q}, \frac{\log k}{q \log\log k} \right\} \right),
\]
as claimed.
\end{proof}

\subsection{Boosting the Distance via a Binary Tree Gadget}
\label{sec: boosting}

In this section, we introduce a binary tree gadget to eliminate the low-order term in the lower bound on the coding gap. In the previous construction, the distance between sinks in each $R_i$ was ensured through a sampling-based approach, which introduced an extra $\log \log k$ factor. Here, we replace high-degree vertices with structured binary trees to achieve the better distance guarantees without relying on sampling. The pseudocode is provided in \Cref{alg: gap instance LDC BT}.

\paragraph{Binary Tree Gadget}  
This construction can be viewed as a modification of the previous bipartite structure. Denote by $G[B,C]$ the bipartite graph induced by the vertex set $B \cup C$. For each edge $(u, v)$ in $G[B, C]$, we first introduce a new intermediate vertex $t$ and replace the edge with two edges $(u, t)$ and $(v, t)$.

Next, for each vertex $u \in B \cup C$ with degree $d$ in the graph, we remove all edges incident to $u$ and replace them with a complete binary tree rooted at $u$ with $d$ leaves. The original neighbors of $u$ are then assigned to the leaves via a uniformly random permutation—each leaf is connected to a unique neighbor of $u$ according to this random ordering.

After this transformation, every intermediate vertex $t$ (originally inserted to split edges) has degree exactly $2$, so we remove $t$ and directly connect its two neighbors with an edge. Although the construction in \Cref{alg: gap instance LDC BT} presents this process in a more direct manner, it is equivalent to the description above.

Notably, the sampling subroutine from the previous section is no longer required in this construction.

\begin{algorithm}[h]
\caption{Generating Gap Instance via LDCs and Binary Tree Gadget}
\label{alg: gap instance LDC BT}
\begin{algorithmic}[1]
\State Set $A \gets \{S\}$ and $s_i \gets S$ for all $i \in [k]$.
\State Set $B \gets \{w_i\}_{i=1}^N$.
\State Set $F \gets \{(S, w_i)\}_{i=1}^N$ and initialize $E \gets F$.
\State Initialize $R_i \gets \emptyset$ for all $i \in [k]$.
\For{\textbf{each vertex} $u \in B$} \Comment{Insert binary tree for each $B$-vertex}
    \State Insert a complete binary tree $T_u$ with $k$ leaves and root at vertex $u$.
    \State Label the leaves of $T_u$ with a random permutation of $\{1, 2, \ldots, k\}$.
\EndFor
\For{\textbf{each} $i = 1$ to $k$} \Comment{Add sinks via LDC decoding}
    \State Initialize $U \gets \emptyset$.
    \While{$|U| \le \delta N$}
        \State Let $D \gets \Call{Decode}{i, U}$.
        \State Add a new vertex $v$ to $R_i$.
        \State Insert a complete binary tree $T_v$ with $q$ leaves and root at vertex $v$.
        \State Let $p: D \to [q]$ be an arbitrary bijection (permutation of $D$).
        \For{\textbf{each} $j \in D$}
            \State Add an edge from the leaf of $T_{w_j}$ labeled $i$ to the $p(j)$-th leaf of $T_v$.
        \EndFor
        \State Update $U \gets U \cup D$.
    \EndWhile
\EndFor
\For{\textbf{each} $i \in [k]$ and \textbf{each} $u \in R_i$} \Comment{Prune close sinks}
    \If{there exists $v \in R_i \setminus \{u\}$ such that $\dist_{G \setminus F}(u, v) \le \frac{1}{4} \log k$}
        \State Remove $v$ from $R_i$.
    \EndIf
\EndFor
\end{algorithmic}
\end{algorithm}

\paragraph{Pruning Subroutine}  
Similar to the previous section, the pruning threshold is chosen to ensure that only a small fraction of sinks are removed. The key advantage of the current construction is that the minimum required distance between sinks can now be set to a logarithmic function of $k$. 

Intuitively, the analysis relies on the structure of the binary tree gadget. Starting from any vertex in a tree and growing a ball of increasing radius, the next labeled vertex encountered is essentially random due to the uniform permutation of labels. Thus, to encounter a vertex with the same label (i.e., corresponding to the same message index $i$), the ball must typically grow to logarithmic radius. This ensures that sinks associated with the same message are, with high probability, far apart in $G \setminus F$.

\begin{lemma}
\label{lem: prune LDC instance BT}
In the pruning subroutine of \Cref{alg: gap instance LDC BT}, at most an $O(k^{-1/2})$ fraction of the sinks are removed with high probability.
\end{lemma}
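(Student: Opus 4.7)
My plan is to fix a sink $u\in R_i$, bound $\Pr[u\text{ is pruned}]$, and then apply Markov's inequality to turn that into a statement about the global fraction. The key structural observation is that any path of length at most $d := \frac{1}{4}\log k$ in $G\setminus F$ from $u$ to another $v\in R_i$ must traverse at least two distinct $B$-trees $T_{w_j}$. Because each $T_{w_j}$ is labeled by a \emph{permutation} of $[k]$, it contains a unique leaf labeled $i$, and both $u$ and $v$ attach to $T_{w_j}$ only through this single leaf; a path using just one $T_{w_j}$ would have to enter and exit at the same leaf and so never move. Consequently any short path has the hop structure $u\to T_{w_{j_1}}\to v_1\to T_{w_{j_2}}\to\cdots\to T_{w_{j_\ell}}\to v$ with $\ell\ge 2$ and intermediate sinks $v_k\in R_{i_k}$, $i_k\ne i$.

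For such a template the path length decomposes into a deterministic part (the root-to-leaf walks inside $T_u$, each $T_{v_k}$, and $T_v$, plus the $2\ell$ connection edges), which is at least $2\log q+4\ell-2$, and a random part $\sum_{k=1}^{\ell} C_k$, where $C_k$ is the distance inside $T_{w_{j_k}}$ between the leaves carrying labels $i_{k-1}$ and $i_k$. A short template therefore requires $\sum_k C_k\le B_\ell:=d-2\log q-4\ell+2$. In a complete binary tree on $k$ leaves, at most $2^h$ leaves lie within distance $2h$ of any fixed leaf, so for an independent uniform permutation $\pi_{j_k}$ we get $\Pr[C_k\le 2h]\le 2^h/k$. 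Independence of the $\pi_{j_k}$'s across different $B$-trees together with the hockey-stick identity applied to $\sum_{s\ge\ell}2^s\binom{s-1}{\ell-1}$ yields
\[
    \Pr\Bigl[\textstyle\sum_k C_k\le B_\ell\Bigr]\ \le\ \frac{2^{H_\ell}\,\binom{H_\ell}{\ell}}{k^\ell},\qquad H_\ell:=B_\ell/2.
\]

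I then count the $\ell$-hop templates starting at $u$: at each hop we choose $j_k\in D_{v_{k-1}}$ ($q$ choices) and the next sink $v_k$ (at most $k$ choices, since the algorithm's per-source update of $U$ forces at most one sink per source to be connected to any given $T_{w_j}$). This gives at most $q^\ell k^{\ell-1}$ templates, so
\[
    \Pr[u\text{ pruned}]\ \le\ \sum_{\ell\ge 2}\frac{q^\ell\binom{H_\ell}{\ell}\,2^{H_\ell}}{k}.
\]
For $q$ in the regime considered in \Cref{thm: coding gap} (e.g.\ $q=(\log k)^{1/4}$) and $H_\ell\le d/2=\frac{1}{8}\log k$, the $\ell=2$ term is $O(\mathrm{polylog}(k)\cdot k^{-7/8})$, while larger $\ell$ are controlled because $H_\ell$ decreases linearly in $\ell$ and $\binom{H_\ell}{\ell}$ vanishes as soon as $\ell>H_\ell$. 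The whole sum is $o(k^{-1/2})$. Averaging over $u$ gives $\E[\text{fraction pruned}]=o(k^{-1/2})$, so Markov's inequality yields $\Pr[\text{fraction pruned}>ck^{-1/2}]=o(1)$, which is exactly the claim.

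The main subtlety is the convolution step: the crude bound $\Pr[\forall k\colon C_k\le B_\ell]$ blows up for large $\ell$, so one must exploit the sum constraint $\sum h_k\le H_\ell$ via hockey-stick to get the $\binom{H_\ell}{\ell}$ factor, and then carefully track the linear shrinkage of $H_\ell$ with $\ell$ to ensure the resulting sum over $\ell$ converges with room to spare. The other ingredient that has to be handled cleanly is the structural fact that $\ell\ge 2$ (so that one obtains two independent random permutations contributing to the bound); this is where the uniqueness of the leaf labeled $i$ in each $T_{w_j}$ is crucial.
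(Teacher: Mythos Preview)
Your template-counting argument has a genuine gap: the factor $q^\ell$ is not controlled when $q$ is super-constant, so the sum $\sum_{\ell\ge 2} q^\ell\binom{H_\ell}{\ell}2^{H_\ell}/k$ diverges precisely in the regime $q=(\log k)^{1/4}$ that you single out. Concretely, take $\ell\approx \tfrac{1}{30}\log k$; then $H_\ell\approx \tfrac{7}{120}\log k>\ell$ so $\binom{H_\ell}{\ell}\ge 1$, while $q^\ell=(\log k)^{\ell/4}=\exp\bigl(\Theta(\log k\cdot\log\log k)\bigr)$ is super-polynomial in $k$ and swamps the $1/k$ denominator. Your assertion that ``larger $\ell$ are controlled because $H_\ell$ decreases linearly'' is therefore false. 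The root cause is that you lower-bound each traversal of an intermediate sink-tree $T_{v_k}$ by $2$ edges and then count $q$ possible exit leaves, but reaching a far leaf of $T_{v_k}$ actually costs up to $2\log q$ edges; by not charging this distance you are over-counting templates by a factor of roughly $(q/2)^{\ell}$.

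The paper's proof avoids this entirely by exploiting the one structural fact you do not use: after the binary-tree gadget, every vertex of $G\setminus F$ has degree at most $3$ (and $u$ has degree $2$). Hence the ball of radius $\tfrac14\log k$ around $u$ contains fewer than $k^{1/4}$ vertices, \emph{independently of $q$}. One then orders these ball vertices by distance from $u$ and reveals the random permutation labels sequentially, showing that each newly reached vertex is ``bad'' (a $B$-tree leaf labeled $i$, or lying in some $T_w$ with $w\in R_i$) with conditional probability at most $\tfrac{1}{k-\ell}$; a product bound gives $\Pr[u\text{ pruned}]\le k^{-3/4}$, and Markov finishes. Your approach can be repaired by also charging the distance inside each $T_{v_k}$ (replacing $q$ at each hop by $2^{r_k}$ with $\sum r_k$ constrained by the path length), which essentially collapses to this degree-bound argument. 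A secondary issue you should also address: your independence claim for the $C_k$'s assumes all $j_k$ are distinct, but a simple path may revisit a $B$-tree through a different pair of leaves, so two of the $C_k$'s can depend on the same permutation.
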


The correctness of \Cref{lem: prune LDC instance BT} follows from the claim below.
\begin{claim}
\label{clm: prune BT}
Before the pruning subroutine, for each $i \in [k]$ and each $u \in R_i$, with probability at least $1 - k^{-3/4}$,
\[
\min_{v \in R_i \setminus \{u\}} \dist_{G \setminus F}(u, v) \ge \frac{1}{4} \log k.
\]
\end{claim}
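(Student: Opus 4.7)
The plan is to combine a deterministic degree bound on $G \setminus F$ with a principle-of-deferred-decisions argument on the random permutations that label the leaves of each $T_{w_j}$. First I would observe that every vertex of $G \setminus F$ has degree at most $3$: internal nodes of the inserted binary trees have degree $3$, while each tree leaf (parent in the tree plus at most one bridge edge) and each binary-tree root (a $w_j$ or a sink $v$) has degree at most $2$. Since $F$ consists only of the cut edges out of $S$ and $S$ becomes isolated in $G \setminus F$, this bound holds throughout $G \setminus F$. Consequently, for $d^\ast = \tfrac{1}{4}\log k$,
\[
|B_{d^\ast}(u)| \;\le\; 1 + 2 + 4 + \dots + 2^{d^\ast} \;<\; 2\,k^{1/4},
\]
and crucially this estimate is deterministic, i.e., it holds pointwise over every realisation of the random permutations.

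Second, I would reduce the distance event to a labelling event. The only edges leaving the subtree $T_v$ are the $q$ bridges between $T_v$'s leaves and the label-$i$ leaves of the trees $T_{w_j}$ for $j \in D_v$, so any path from $u$ to the root $v$ in $G \setminus F$ must cross one such bridge. Because the sampling subroutine enforces that the sets $\{D_{v'}\}_{v' \in R_i}$ are pairwise disjoint, the relevant index satisfies $j \in D^\ast := \bigcup_{v' \in R_i \setminus \{u\}} D_{v'}$, which is disjoint from $D_u$. Hence the bad event ``some $v \in R_i \setminus \{u\}$ lies in $B_{d^\ast}(u)$'' is contained in the event that $B_{d^\ast}(u)$ meets a leaf of some $T_{w_j}$ with $j \in D^\ast$ whose label happens to equal $i$.

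Third, I would run BFS from $u$ in $G \setminus F$ and, using the principle of deferred decisions, reveal $\pi_{w_j}$ entry by entry, only at the moment BFS first visits the corresponding leaf. Conditioned on the history, the new label is uniform over the labels of $T_{w_j}$ not yet assigned. Since at most $|B_{d^\ast}(u)| < 2k^{1/4}$ entries are ever exposed across the whole process, this residual set has size at least $k - 2k^{1/4} \ge k/2$ for large $k$, so the probability that the new label equals $i$ is at most $2/k$. A union bound over the at most $2k^{1/4}$ revelations (restricted to those happening inside trees $T_{w_j}$ with $j \in D^\ast$) yields
\[
\Pr\Bigl[\,\min_{v \in R_i \setminus \{u\}} \dist_{G \setminus F}(u,v) < d^\ast \Bigr] \;\le\; 2k^{1/4} \cdot \tfrac{2}{k} \;=\; 4\,k^{-3/4},
\]
which is absorbed into $k^{-3/4}$ after a trivial adjustment of the constant in $d^\ast$, giving the claim.

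The main obstacle is the adaptivity of BFS: the set of leaves that enter $B_{d^\ast}(u)$ depends on where the random bridges land, so a priori the number of labels that get revealed is itself a random quantity and could conceivably be large for some unlucky labelings. The deterministic maximum-degree bound $\Delta(G \setminus F) \le 3$ is exactly what makes the argument go through, since it caps $|B_{d^\ast}(u)|$, and hence the number of revelations, pointwise over all permutations; this is what turns the deferred-decision step into a clean union bound rather than a more delicate martingale estimate.
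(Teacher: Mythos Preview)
Your proposal is correct and follows essentially the same approach as the paper: both arguments combine the deterministic degree bound $\Delta(G\setminus F)\le 3$ (hence $|B_{d^\ast}(u)|=O(k^{1/4})$) with a deferred-decisions analysis of the random leaf labelings, and both reduce the distance event to the appearance of a label-$i$ leaf in a ``foreign'' $B$-tree. The only cosmetic differences are that the paper orders the ball by distance and uses a telescoping product $\prod_\ell(1-\tfrac{1}{k-\ell})$, whereas you run BFS and apply a union bound; and your bound comes out as $4k^{-3/4}$ rather than $k^{-3/4}$, which is harmless for the downstream application (Lemma~\ref{lem: prune LDC instance BT} only needs $O(k^{-3/4})$), so the ``adjust the constant in $d^\ast$'' remark is unnecessary.

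One small point worth tightening: when BFS enters a tree $T_{w_j}$ via a bridge from some $T_{v'}$ with $v'\in R_{i'}$, you are revealing the \emph{position} of label $i'$ rather than the label at a fixed position, so your sentence ``reveal $\pi_{w_j}$ entry by entry, only at the moment BFS first visits the corresponding leaf'' slightly understates what is happening. This does not affect correctness---both kinds of revelation remove one entry from the residual permutation, and the \emph{first} occurrence of the bad event must be of the ``label at a visited position equals $i$'' type (since entering at a label-$i$ leaf via a bridge would require already being inside some $T_{v'}$ with $v'\in R_i\setminus\{u\}$). The paper handles this via its two-case analysis on whether the predecessor lies in $B'$ or $C'$; your argument implicitly relies on the same observation.
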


\begin{proof}
    The only randomness in the algorithm arises from labeling the leaves of each binary tree using independent random permutations. For the sake of analysis, we may equivalently assume that all sinks are added first, and then the leaf labels of every binary tree are assigned uniformly at random. 

     Let $D$ be the indices we used to create sink $u$. Consider a path from $u$ to some $v\in R_i\setminus \{u\}$. The path should cross some binary tree $T_{w_j}$ for some $j\not\in D$. This motivates us to define the following set
     \[
        P = \{\, v \in G \setminus S : \dist_{G \setminus F}(u, v) < \tfrac{1}{4}\log k \,\} 
        \setminus \bigl( \cup_{j \in D} V(T_{w_j}) \cup V(T_u) \bigr).
    \]
     Since the degree of each vertex is at most 3 in $G\setminus F$ and the degree of $u$ is 2, we have $s:=|P| < k^{\frac14}$.

    Let $C = \bigcup_{i\in [k]} R_i$. Let $B' = \bigcup_{w\in B} V(T_w)$ and $C' = \bigcup_{w\in C} V(T_w)$. For each vertex $v \in P$, we say that $v$ is bad if either $v \in B'$ is labeled by~$i$, or $v \in C'$ lies in $T_w$ for some $w \in R_i$. If there is no bad vertex in $P$, by the definition of $P$ and the discussion above, the inequality in the claim holds.
    
    For the vertices in $P$, we order them by their distance to the vertex $u$ in $G\setminus F$, i.e. $P = \{v_1,v_2,\ldots,v_s\}$ such that $\dist_{G\setminus F}(u,v_\ell)\le \dist_{G\setminus F}(u,v_{\ell+1})$, then we shall prove that
    \begin{align}
    \label{eqn: ub bad prob}
        \Pr\bigl[v_\ell\text{ is bad}\mid v_j\text{ is good } \forall j<\ell \bigr] \le \frac{1}{k-\ell}
    \end{align}
    which will imply that,
    \begin{align*}
        \Pr\bigl[\forall v\in P, v \text{ is good}\bigr] &\ge \prod_{\ell=1}^s \Pr\bigl[v_\ell\text{ is good}\mid v_j\text{ is good } \forall j<\ell\bigr]\ge \prod_{\ell=1}^s \frac{k-\ell-1}{k-\ell}\ge 1 - k^{-\frac34}.
    \end{align*}
    This will conclude the claim as we discussed in the beginning. The rest is proving \Cref{eqn: ub bad prob}. Let $d = \dist_{G\setminus F}(u,v_\ell)$. There are two cases.
    \begin{enumerate}
        \item There exists a vertex $v^*\in C'$ adjacent to $v_\ell$, and $\dist_{G\setminus F}(v^*,u)<d$. Let $w$ be the sink vertex such that $v^*\in T_w$. By assumption, $w\not\in R_i$ as $v^*$ is good, thereby  $v_\ell$ is not labeled by $i$, nor in another binary tree $T_{w'}$, which means $v_\ell$ is good.
        \item Otherwise, there exists a vertex $v^*\in B'$ adjacent to $v_\ell$, and $\dist_{G\setminus F}(v^*,u)<d$. We only need to consider the case when $v_\ell$ is a leaf of the binary tree, in which $v_\ell$ has a label. Let $T_{w_j}$ be the tree contains $v^*$. If $j\in D$ then the label of $v^*$ must be different from $i$, since otherwise $v_\ell$ should be in $T_u$ which contradicts the definition of $P$. Now $j \notin D$. Among the first $\ell$ vertices in $P$, there are at least $k - \ell$ unseen labels, including~$i$.
        Since the label of $v_\ell$ is assigned uniformly among these remaining labels as in a random permutation, the probability that $v_\ell$ receives label~$i$ is at most $\frac{1}{k - \ell}$. \qedhere
    \end{enumerate}
\end{proof}
\begin{proof}[Proof of \Cref{lem: prune LDC instance BT}]
    By \Cref{clm: prune BT}, $k^{-\frac34}$ is an upper bound of the probability that each sink $u$ be pruned. Therefore, the expected fraction of the set pruned is $k^{-\frac34}$. Then we conclude with Markov's inequality.
\end{proof}

We name this instance $I^* = (G,\ses,F)$. Applying \Cref{lem: gap instance} to our gap instance $I^*$, we prove our main lemma.

\begin{lemma}[Restatement of \Cref{lem: LDC to NC}]
\label{lem: LDC to NC restated}
Suppose there exists a linear $(q, \delta, q\delta)$-locally decodable code $\C : \F^k \rightarrow \F^N$ with perfect smoothness. Then the network coding gap for the instance $I^*$ is at least $\Omega\left( \frac{\delta \log k}{q} \right)$, where the graph has size $n = \Theta(Nk)$.
\end{lemma}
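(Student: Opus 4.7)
The plan is to read off the five parameters $(a,b,f,m,r)$ of the gap instance $I^*$ produced by \Cref{alg: gap instance LDC BT}, check each one, and then substitute into \Cref{lem: gap instance}. The bound should fall out mechanically once the parameters are in hand.

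\textbf{Step 1 (coding throughput $a \ge 1$).} Following the template of \Cref{lem: sampling nc gap}, I would draw each source symbol $\x_i$ uniformly and independently from $\F$, set the common scale $c^* = \log|\F|$, and exhibit a coding solution in which every edge carries a single $\F$-symbol. The source $S$ transmits $\C(\x)_j$ along the cut edge $(S,w_j)$; inside each binary tree $T_{w_j}$, the root broadcasts $\C(\x)_j$ downward so every leaf holds the same symbol $\C(\x)_j$. Symmetrically, each sink $v\in R_i$ collects, from the leaves of its tree $T_v$, the $q$ codeword coordinates indexed by the set $D$ chosen in line~8, and then runs the perfectly-smooth LDC decoder on an \emph{uncorrupted} codeword to recover $\x_i$ deterministically (this is the ``noise-free'' use of the decoder guaranteed by perfect smoothness). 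Every edge carries entropy $\le \log|\F| = 1\cdot c_e\cdot c^*$ and each source has entropy $\log|\F| = 1\cdot d_i\cdot c^*$, so the achieved rate is $a \ge 1$.

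\textbf{Step 2 (combinatorial parameters $b,f,r,m$).} The cut edges are exactly the $N$ edges leaving $S$, so $f = N$. The distance bound $b \ge \tfrac14 \log k$ is supplied directly by \Cref{lem: prune LDC instance BT}, and the same lemma loses only an $O(k^{-1/2})$ fraction of sinks. For the sink count, in each outer iteration the while-loop adds exactly one sink per $q$ fresh indices appended to $U$, so it runs $\Theta(\delta N/q)$ times; summing over $k$ sources and applying the pruning loss gives
\[
    r = \Theta\!\left(\frac{\delta N k}{q}\right).
\]
For the edge count, there are four contributions: (i) the $N$ cut edges; (ii) a binary tree of $\Theta(k)$ edges at each of the $N$ vertices of $B$, for $\Theta(Nk)$; (iii) a binary tree of $\Theta(q)$ edges at each of the $r$ sinks, for $\Theta(rq) = \Theta(\delta N k)$; and (iv) the one leaf-to-leaf edge per (sink, query) pair, also $rq = \Theta(\delta N k)$. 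The dominating term is (ii), giving $m = \Theta(Nk)$, and $n=\Theta(Nk)$ for the same reason.

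\textbf{Step 3 (conclusion).} Plugging into \Cref{lem: gap instance},
\[
    \frac{a\cdot r}{f + 2m/b}
    \;=\; \frac{\Theta(\delta N k/q)}{N + \Theta(Nk/\log k)}
    \;=\; \Omega\!\left(\frac{\delta \log k}{q}\right),
\]
since for any non-trivial $\delta,q$ the term $2m/b = \Theta(Nk/\log k)$ dominates the $f=N$ term in the denominator. This is exactly the claimed gap.

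\textbf{Main obstacle.} The counting in Step~2 and the algebra in Step~3 are routine once the gadget is pinned down; the only genuinely delicate point is Step~1, namely that the binary-tree gadgets must be transparent to the coding solution (a single symbol is simply broadcast along each tree), yet simultaneously force the non-coding Steiner benchmark to ``pay'' roughly one edge per leaf. That asymmetry is already baked into the dual-assignment argument of \Cref{lem: gap instance}, so I do not expect additional work beyond verifying that the LDC decoder can be invoked noise-free via perfect smoothness, as in \Cref{fact: ldc decode}.
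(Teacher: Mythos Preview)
Your proposal is correct and follows the paper's proof essentially line by line: same coding scheme, same parameter count, same substitution into \Cref{lem: gap instance}. The one point the paper makes explicit that you glossed over is that in the sink-side trees $T_v$ the data flows \emph{upward}, with each internal node outputting the appropriate partial linear combination of its two inputs (this is exactly where the linearity hypothesis on the decoder is used); your phrase ``a single symbol is simply broadcast along each tree'' is accurate for the $B$-side trees but not the sink trees, and without the linear-aggregation observation the capacity check ``every edge carries entropy $\le \log|\F|$'' in Step~1 would not go through.
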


\begin{proof}
The proof follows the same outline as that of \Cref{lem: sampling nc gap}. For the network coding throughput, we assume each source message is uniformly distributed over $\F$. Each edge in the network carries a symbol from $\F$, and each sink recovers its target symbol using $q$ symbols from the second layer. Specifically, each vertex in $B$ receives one coordinate of the codeword generated by the locally decodable code, and each sink in the third layer wants to query $q$ such coordinates.

The binary tree $T_v$ rooted at each $v \in B$ propagates the value of the corresponding codeword coordinate toward its leaves. For each sink $r$, the binary tree $T_r$ aggregates the $q$ codeword symbols received from the second layer and computes the linear decoding function as the data flows upward toward the root. In this way, the network performs the entire decoding process within its structure.

The scale defined in the network coding solution is set by $c^* = \log s$, which is the same as in \Cref{lem: sampling nc gap}. Under this scaling, the network supports a coding solution that delivers the full message to each sink, and therefore the coding throughput of $I_S$ is at least $a = 1$.

We now estimate the parameters in \Cref{lem: gap instance}. For each $i \in [N]$, the algorithm constructs a binary tree of size $\Theta(k)$ rooted at $c_i \in B$, contributing $\Theta(Nk)$ vertices. Additionally, each sink is augmented with a binary tree of size $\Theta(q)$. The total number of sinks is $\Theta(\delta Nk / q)$, so the overall graph size is $n = \Theta(Nk)$ and the number of edges is $m = \Theta(n)$, as all vertices have constant degree except for the source $S$.

The number of sinks that remain after pruning is $r = \Theta(\delta Nk / q)$ by \Cref{lem: prune LDC instance BT}. The distance parameter is $b > \frac{1}{4} \log k$, as ensured by the pruning subroutine, and the number of cut edges is $f = N$.

Applying \Cref{lem: gap instance}, the coding gap is at least:
\[
\frac{r}{f + 2m / b} = \Omega\left( \min\left\{ \frac{r}{f}, \frac{r b}{2m} \right\} \right) = \Omega\left( \min\left\{ \frac{\delta k}{q}, \frac{\delta \log k}{q} \right\} \right) = \Omega\left( \frac{\delta \log k}{q} \right).
\]
\end{proof}

\section{Upper Bound Network Coding Gap}

\label{sec: ub}

Given a graph $G = (V,E)$ with capacities $c_e$ for each edge $e\in E$. It will be convenient to define $c_{uv} = c_e$ for all edge $e = (u,v)\in E$ and $c_{uv} = 0$ otherwise. For any set of vertex $U\subseteq V$, we define the cut capacity $C(U,\overline{U}) = \sum_{u\in U,v\in \overline{U}} c_{uv}$. Recall that $S_i = R_i\cup \{s_i\}$ is the set of terminals including source and all sinks for demand $i$. Then, for any set of vertex $U$, we define the demand crossing $U$ to be
\[
    D(U,\bar{U}) := \sum_{\substack{1\le j\le k\\ 0<|S_j\cap U|<|S_j|}} d_i
\]

Define $\Psi = \min_{U} \frac{C(U,\overline{U})}{D(U,\overline{U})}$ which generalizes the definition of the sparsest cut. It is straightforward to verify that the value of the integral solution to the dual LP in \Cref{app: Steiner packing} is at most~$\Psi$, since the dual objective achieves~$\Psi$ under the assignment $y_e = z_i = 1$ for all edges~$e$ and demands~$i$ crossing~$U$, and $y_e = z_i = 0$ otherwise. Furthermore, we show that $\Psi$ is an
upper bound on the network coding throughput.

\begin{lemma}
\label{lem: general sparsity ub coding}
    The multi-source multicast network coding throughput is at most $\Psi$.
\end{lemma}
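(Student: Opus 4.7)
The plan is to carry out a cut-based entropy argument: for any vertex subset $U$, the information crossing the cut in each direction is bounded by the cut capacity, and the independence of session messages lets us lower bound this crossing information by the entropy of the demands that must traverse the cut. Fix an arbitrary $U \subset V$ and a network coding solution achieving throughput $r$ at scale $c^*$. Since $G$ is undirected, each edge $e$ of the cut is either oriented from $U$ to $\overline{U}$ or from $\overline{U}$ to $U$ (in the time-expanded DAG described in \Cref{app: network coding} if cycles are present); let $X$ denote the concatenation of the symbols transmitted on edges crossing in the first direction, and $Y$ the symbols in the second direction. The edge-capacity constraint in the definition of a network coding solution gives $H(X) + H(Y) \le C(U,\overline{U}) \cdot c^*$.

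Partition the sessions by which side contains the source. Let $J^{\mathrm{cross}}_U$ be the set of $j$ with $s_j \in U$ and $R_j \cap \overline{U} \neq \emptyset$, and define $J^{\mathrm{cross}}_{\overline{U}}$ symmetrically; their union is exactly the set of sessions with $0 < |S_j \cap U| < |S_j|$. Write $m_{\overline{U}}$ for the tuple of messages from sources in $\overline{U}$ and $m_U^{\mathrm{cross}}$ for the tuple of messages of sessions in $J^{\mathrm{cross}}_U$. The key claim is $H(m_U^{\mathrm{cross}}) \le H(X)$, proved as follows. By induction on the topological order of the time-expanded DAG, every symbol produced at a vertex in $\overline{U}$, and in particular every symbol observed by sinks in $\overline{U}$, is a deterministic function of $X$ together with $m_{\overline{U}}$. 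The correctness requirement forces the sinks in $R_j \cap \overline{U}$ for each $j \in J^{\mathrm{cross}}_U$ to recover $m_j$, hence $H(m_U^{\mathrm{cross}} \mid X, m_{\overline{U}}) = 0$. Combining this with the independence of sessions (so $I(m_U^{\mathrm{cross}}; m_{\overline{U}}) = 0$) gives
\[
H(m_U^{\mathrm{cross}}) = I(m_U^{\mathrm{cross}}; X, m_{\overline{U}}) = I(m_U^{\mathrm{cross}}; X \mid m_{\overline{U}}) \le H(X).
\]
The analogous argument on the other side yields $H(m_{\overline{U}}^{\mathrm{cross}}) \le H(Y)$.

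Adding the two inequalities and using independence again to write $H(m_U^{\mathrm{cross}}) + H(m_{\overline{U}}^{\mathrm{cross}}) = \sum_{j} H(m_j)$, where the sum is over sessions crossing $U$, and then applying the throughput lower bound $H(m_j) \ge r \cdot d_j \cdot c^*$, one obtains $r \cdot D(U,\overline{U}) \cdot c^* \le C(U,\overline{U}) \cdot c^*$, i.e.\ $r \le C(U,\overline{U})/D(U,\overline{U})$. Taking the minimum over $U$ yields $r \le \Psi$. The main obstacle I anticipate is making the inductive ``everything in $\overline{U}$ depends only on $X$ and $m_{\overline{U}}$'' argument precise when $G$ has cycles; this is exactly the point at which one must invoke the time-expanded graph definition of a network coding solution from \Cref{app: network coding} so that the causality constraint supplies a genuine topological order to induct over.
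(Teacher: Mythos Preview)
Your proof is correct and follows essentially the same cut-based entropy argument as the paper: both prove $H(X_l)\ge H(Y_l)$ and $H(X_r)\ge H(Y_r)$ via the determinism of the time-expanded DAG (yielding zero conditional entropy) combined with independence of source messages, and then compare to the cut capacity. The only cosmetic differences are that the paper conditions on $(Y_r,Z)$ rather than your $m_{\overline{U}}$ and phrases the conclusion as a contradiction at the minimizing $U$, whereas you bound arbitrary $U$ directly and then minimize.
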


\begin{proof}
    Let $U$ be the vertex set such that $\Psi = \min_{U} \frac{C(U,\overline{U})}{D(U,\overline{U})}$. Suppose there are $\ell$ edges crossing $U$ and $\overline{U}$, denoted by $e_1,\ldots,e_{\ell}$. Recall the definition in \Cref{app: network coding}, each edge is splitted into two directed edge and $\Gamma(\vec{e})$ represent the alphabet on edge $\vec{e}$. Let $X_{\vec{e}}\in \Gamma(\vec{e})$ be the symbol transmitted on edge $\vec{e}$. Let
    \[
        X_l = (X_{\overrightarrow{e_1}}, X_{\overrightarrow{e_2}}, \ldots,X_{\overrightarrow{e_\ell}})
    \]
    be all the symbols transmitted on the edges from $U$ to $\overline{U}$, and
    \[
        X_r = (X_{\overleftarrow{e_1}},X_{\overleftarrow{e_2}},\ldots,X_{\overleftarrow{e_\ell}})
    \]
    Suppose there are $p$ terminal sets $S_{j_1},\ldots,S_{j_p}$ that cross $U$ with source node in $U$, i.e. $s_{j_i}\in U$ and $R_{j_i}\cap \overline{U}\ne \emptyset$, and $q$ terminal sets $S_{j'_1},\ldots,S_{j'_q}$ that cross $U$ with source node in $\overline{U}$, i.e. $s_{j'_i}\in \overline{U}$ and $R_{j'_i}\cap U\ne \emptyset$. Let $m_j\in M_j$ be the message of the $j$-th source. Let
    \[
        Y_l = (m_{j_1},m_{j_2},\ldots,m_{j_p})
    \]
    and
    \[
        Y_r = (m_{j'_1},m_{j'_2},\ldots,m_{j'_q})
    \]
    such that $Y_l$ and $Y_r$ together contains all the messages relevant to this cut $(U,\overline{U})$, and we define $Z$ to be all the other messages. We want to show that
    \[
        H(X_l)\ge H(Y_l)\quad \text{and}\quad H(X_r)\ge H(Y_r),
    \]
    then the lemma follows by definition of network coding throughput. To see this, suppose by contradiction that the throughput exceeds $\Psi$: there exists scale constant $c^*$ such that $H(Y_l\cup Y_r) > D(U,\overline{U})\cdot \Psi\cdot c^*$, we have $H(X_l\cup X_r) > D(U,\overline{U})\cdot \Psi\cdot c^* = C(U,\overline{U})\cdot c^*$, which implies that there exists an edge $e$ such that $H(e) > c_e\cdot c^*$, contradicts to the capacity constraint.
    
    We claim that the conditional entropy $H(Y_l\mid X_l,Y_r,Z) = 0$ and $H(Y_r\mid X_r,Y_l,Z) = 0$. Without loss of generality we prove the first one, we need to show that each message $m_{j_i}$ is determined by $X_l,Y_r$ and $Z$, this is indeed as in the time-expanded graph which is acyclic, the output can be computed once the input is fixed, and once we truncate all the vertices and edges in $U$ which are prior to $X_l$, every symbol on all the edges could still be computed given $X_l,Y_r$ and $Z$. By assumption, there exists $r\in R_{j_i}$ such that $r\in \overline{U}$, hence $m_{j_i}$ can be computed for all $i$, which implies our claim.

    Furthermore, $Y_l,Y_r$ and $Z$ are mutually independent. Therefore by the claim we derived and applying chain rule, $I(Y_l;X_l\mid Y_r,Z) = H(Y_l\mid Y_r,Z) - H(Y_l\mid X_l,Y_r,Z) = H(Y_l) - 0 = H(Y_l)$. On the other hand, $I(Y_l;X_l\mid Y_r,Z)\le H(X_l\mid Y_r,Z)\le H(X_l)$. Combining together, we have $H(X_l)\ge H(Y_l)$, and same holds for $H(X_r)\ge H(Y_r)$.
\end{proof}

Therefore, the proof of \Cref{thm: upper bound coding gap} is complete once we have the following lemma, which gives an upper bound on the LP integrality gap. The rest of this section is to prove \Cref{lem: main LP int gap}.

\begin{lemma}
\label{lem: main LP int gap}
    The ratio of the generalized sparsity $\Psi$ over the multi Steiner tree number is at most $O(\log n)$.
\end{lemma}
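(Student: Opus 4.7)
Plan. The aim is to bound the integrality gap between $\Psi(G)$ and the LP relaxation of multi Steiner tree packing. By the LP formulation in Appendix~A.2 and LP duality, $\tau(G)$ equals
\[
\min\Bigl\{\textstyle\sum_e c_e y_e : \sum_{e\in T} y_e \ge z_i \text{ for every Steiner tree } T\supseteq S_i,\ \sum_i d_i z_i = 1,\ y,z\ge 0\Bigr\}.
\]
Every integer cut $F$ produces a feasible integer dual of value $C(F,\bar F)/D(F,\bar F)$, so $\tau(G) \le \Psi(G)$; it remains to prove $\Psi(G) \le O(\log n)\,\tau(G)$.

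The plan is to invoke Räcke's $O(\log n)$-approximate cut-tree packing~\cite{racke2008optimal}, which yields a convex combination $\{(T_j,\lambda_j)\}$, $\sum_j \lambda_j = 1$, of capacitated trees $T_j$ on vertex set $V(G)$ such that: (i) every multi Steiner packing feasible in $T_j$ lifts to a feasible packing in $G$ upon summing across $j$ with weights $\lambda_j$, giving $\tau(G) \ge \sum_j \lambda_j\, \tau(T_j)$; and (ii) on average tree cuts dominate graph cuts, i.e.\ for every cut $(U,\bar U)$ of $G$, $\sum_j \lambda_j\, C_{T_j}(U,\bar U) \ge C_G(U,\bar U)/O(\log n)$. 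On each tree $T_j$, the multi Steiner packing number and the generalized sparsity coincide: since every Steiner tree on a tree is the unique minimal subtree spanning its terminals, the packing LP reduces to a single-edge constraint, and the cut-counting argument in the proof of Lemma~2.2 specialized to trees yields $\tau(T_j) = \Psi(T_j) = \min_{e \in T_j} c_{T_j}(e)/D(U_e,\bar U_e)$, where $(U_e,\bar U_e)$ is the cut induced by removing $e$ from $T_j$.

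Combining, $\tau(G) \ge \sum_j \lambda_j\,\Psi(T_j)$, and the lemma reduces to showing $\sum_j \lambda_j\,\Psi(T_j) \ge \Psi(G)/O(\log n)$. This is the technical crux: one must transfer Räcke's cut-covering guarantee (a cut-by-cut statement) into an integrality-gap bound for the full Steiner packing LP. Equivalently, this is the Steiner generalization of the classical $O(\log n)$ multicommodity flow-cut gap, obtained via a Bourgain-style $\ell_1$ embedding of the dual-LP metric $d := d_{y^*}$ into a convex combination $\sum_U \mu_U d_U$ of cut pseudo-metrics with distortion $O(\log n)$. The capacity side gives $\sum_U \mu_U C(U,\bar U) = \sum_e c_e \sum_U \mu_U d_U(e) \le O(\log n)\sum_e c_e d(e) \le O(\log n)\,\tau(G)$ via $d(u,v)\le y^*_{(u,v)}$. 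The main obstacle lies on the demand side: unlike the unicast case where the constraint $d(s_i,t_i)\ge z_i$ transfers directly into $\sum_U \mu_U \mathbf{1}_{U \text{ sep } \{s_i,t_i\}} \ge z_i$, the Steiner constraint (min Steiner weight of $S_i$ is $\ge z_i$) must be routed through an auxiliary spanning tree of $S_i$ (e.g.\ the FRT tree restricted to $S_i$) to recover $\sum_U \mu_U\,D(U,\bar U) \ge \Omega(1)$; this is where the tree structure of Räcke's decomposition is essential. An averaging then produces a single cut $U^*$ with $C(U^*,\bar U^*)/D(U^*,\bar U^*) \le O(\log n)\,\tau(G)$, completing the proof.
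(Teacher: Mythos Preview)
Your R\"acke-based outline is the right idea and is essentially what the paper does, but you have misplaced where the $O(\log n)$ factor enters, and this leads you into an unnecessary and unfinished Bourgain detour.

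Take the tree-edge capacities to be $c_{T_j}(x,y):=C_G(U_{x,y},\bar U_{x,y})$, the value in $G$ of the cut obtained by deleting $(x,y)$ from $T_j$. Then the cut-tree packing property says precisely that lifting a $T_j$-feasible packing to $G$ through the embedding $P_j$ and averaging with weights $\lambda_j$ incurs congestion at most $\alpha=O(\log n)$ in $G$, \emph{not} congestion $1$. So the correct version of your (i) is
\[
\tau(G)\ \ge\ \frac{1}{\alpha}\sum_j \lambda_j\,\tau(T_j),
\]
with the $O(\log n)$ already spent here. Once (i) is stated this way, your ``technical crux'' dissolves: every tree edge $(x,y)$ induces an honest vertex cut $(U_{x,y},\bar U_{x,y})$ of $G$, hence $\Psi(T_j)=\min_{(x,y)} C_G(U_{x,y})/D(U_{x,y})\ge \Psi(G)$ termwise, and therefore $\tau(G)\ge \Psi(G)/\alpha$ immediately. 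No averaging over cuts, no property~(ii), and no embedding is needed. This is exactly the paper's argument: route each session $i$ along the unique minimal subtree of $T_j$ spanning $S_i$, observe that the load on tree edge $(x,y)$ is $D_j(x,y)\le C_j(x,y)/\Psi(G)$, and apply the cut-tree inequality to bound the congestion in $G$ by $\alpha/\Psi(G)$.

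The Bourgain-embedding branch, by contrast, has a real gap that you yourself flag but do not close. The dual Steiner constraint ``minimum Steiner weight of $S_i$ under $y^*$ is at least $z_i$'' does not linearize under an $\ell_1$ decomposition: a cut $U$ either separates $S_i$ or it does not, and there is no inequality of the form $\sum_U \mu_U\,\mathbf{1}[U\text{ separates }S_i]\ge \Omega(z_i)$ that follows from the Steiner-weight lower bound without losing a factor depending on the tree (e.g.\ the number of edges of the auxiliary tree crossing $U$). Saying ``route through an FRT tree restricted to $S_i$'' does not supply the missing inequality. Drop this branch; fix the factor in (i), note $\Psi(T_j)\ge\Psi(G)$, and you are done.
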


The proof builds upon the results from oblivious routing \cite{racke2008optimal} using the \emph{tree-based flow-sparsifiers}. We first introduce the preliminaries. Then we use the tree-based flow-sparsifiers to lower bound the Steiner tree packing number by a $O(\log n)$-factor of relaxation of the generalized sparsity $\Psi$.

The generalized tree packing of a graph $G$ consists of a collection of trees $T_i$ and distribution $\lambda_i$ (i.e. $\sum_i \lambda_i = 1$) such that each tree spans the vertex set $V$ but not necessarily a subgraph of $G$, along with an embedding of the tree $T_i$ in the graph, i.e. a mapping $P_i$ that maps each edge $(x,y)\in T_i$ to a path in $G$, which denoted by $P_i(x,y)$. For each $(x,y)\in T_i$, removing $(x,y)$ separates $T_i$ into two parts $(U,V\setminus U)$. We define $C_i(x,y) := C(U,\overline{U})$ as the size of the cut, and define $D_i(x,y) := D(U,\overline{U})$ which captures the weighted demand that crosses the edge $(x,y)$. Now we are ready to define the proposition that is required for the tree packing.

\begin{definition}[Cut-Tree Packing]
    It is called an \emph{$\alpha$-approximate cut-tree packing}, if a collection of trees $T_i$ with a distribution $\lambda_i$ satisfies the following inequality for every edge $(u,v)\in E$,

    \[
        c_{uv}\ge \frac{1}{\alpha} \sum_i \lambda_i\sum_{(x,y)\in T_i: (u,v)\in P_i(x,y)}C_i(x,y) ~.
    \]
    i.e. the capacity of $(u,v)$ is at least $1/\alpha$ fraction of the sum of weighted cuts associated with the paths using $(u,v)$.
\end{definition}

Given the terminal set $S_j := \{s_j\}\cup R_j$, we want to define a multi Steiner tree packing solution based on each tree $T_i$ we constructed for the cut-tree packing. For each tree $T_i$, there exists a minimal subtree of $T_i$ that connects $S_j$. The embedding $P_i$ of this subtree corresponds to a graph $G_{ij}$ which is a subgraph of $G$. After eliminating isolated vertices, $G_{ij}$ is a connected subgraph of $G$. Let $T_{ij}$ be an arbitrary spanning tree of $G_{ij}$. We know that $T_{ij}$ spans $S_j$. We add $T_{ij}$ to the multi Steiner tree packing with weight $\lambda_i$.

This gives a multi Steiner tree packing solution that serves one unit demand, as the summation of $\lambda_i$ equals one. Denote by $\congestion_{uv}\in \mathbb{R}_{\ge0}$ the \emph{congestion} of edge $(u,v)$, which is the total amount of capacity of edge $(u,v)$ used by the packing solution over the capacity of edge $(u,v)$. The congestion $\phi$ of the packing solution is the maximum congestion over all edges. The multi Steiner tree number is then $\tau = \frac1{\phi}$ as it is the maximum fraction demand that could be satisfied while preserving the capacity constraint. For our multi Steiner tree packing solution, the congestion on edge $u,v$ is

\[
    \congestion_{uv} \le \frac{\sum_i \lambda_i\sum_{(x,y)\in T_i: (u,v)\in P_i(x,y)}D_i(x,y)}{c_{uv}} ~.
\]

Suppose we use $\alpha$-approximate cut-tree packing for the solution, we can upper bound the congestion,

\[
    \congestion_{uv} \le \alpha\cdot \frac{\sum_i \lambda_i\sum_{(x,y)\in T_i: (u,v)\in P_i(x,y)}D_i(x,y)}{\sum_i \lambda_i\sum_{(x,y)\in T_i: (u,v)\in P_i(x,y)}C_i(x,y)}\le \alpha \max_{i,x,y} \frac{D_i(x,y)}{C_i(x,y)} \le \alpha/\Psi ~.
\]

Therefore, the multi Steiner tree number $\tau = \frac{1}{\phi}\ge \frac1{\alpha}\cdot \Psi$. \Cref{lem: main LP int gap} is thereby a consequence of the following lemma. \qed

\begin{lemma}[\cite{racke2008optimal,williamson2011design}]
\label{thm: cut tree packing}
    There exists an $O(\log n)$-approximate cut-tree packing.
\end{lemma}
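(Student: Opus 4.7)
The plan is to derive the packing via the LP-duality framework underlying Räcke's oblivious routing theorem. The existence of an $\alpha$-approximate cut-tree packing is naturally a covering LP over pairs $(T, P)$ of trees with embeddings, with variables $\lambda_T \ge 0$ subject to $\sum_T \lambda_T = 1$ and, for every edge $(u,v) \in E$, the constraint $\sum_T \lambda_T \sum_{(x,y) \in T : (u,v) \in P(x,y)} C_T(x,y) \le \alpha \cdot c_{uv}$. By LP duality (equivalently, a minimax argument on the simplex of trees), an $O(\log n)$-approximate packing exists iff for every non-negative edge weighting $w : E \to \R_{\geq 0}$ there is a single tree $T$ with embedding $P$ satisfying
\[
    \sum_{(x,y) \in T} C_T(x,y) \cdot \sum_{e \in P(x,y)} w_e \;\le\; O(\log n) \cdot \sum_{e \in E} c_e w_e.
\]
This reduces the packing question to a one-tree subproblem against an adversarial weighting.

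Next, I would construct the single tree for a fixed weighting $w$ via a hierarchical decomposition of $G$ with respect to the weighted shortest-path metric $d_w$ induced by edge lengths $w_e/c_e$. Recursively partition $V$ into clusters of geometrically decreasing diameter using a Bartal/FRT-style randomized low-diameter decomposition. This produces a laminar family of depth $O(\log n)$, which induces a tree $T$ whose internal nodes correspond to clusters and whose leaves correspond to vertices of $V$. Each tree edge is then assigned, via $P$, to a weighted shortest path in $G$ between representative vertices of the two clusters it joins. The two separate objects in the definition, namely the cut $C_i(x,y)$ and the embedding $P_i(x,y)$, are then both natural consequences of the same decomposition: the cut capacity equals the boundary capacity of the corresponding cluster, and the embedding path stays inside that cluster.

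The key calculation is then to bound the dual objective level by level. A tree edge born at level $\ell$ has $C_T(x,y)$ bounded by the boundary capacity of its cluster, and its embedded path has $w$-length at most the level-$\ell$ diameter. Standard FRT-style padding guarantees that any original edge $e \in E$ crosses a level-$\ell$ boundary with probability scaling inversely with the level-$\ell$ diameter, so the expected contribution of each level to the dual objective is at most $O(\sum_e c_e w_e)$; summing across $O(\log n)$ levels yields the bound. The main obstacle is the simultaneous control of embedding stretch (favoring large clusters) and boundary sparsity (favoring small clusters), and this trade-off is precisely what Räcke's probabilistic partitioning balances. Finally, the overall existence of a distribution over trees follows from the one-tree guarantee combined with the multiplicative-weights (or ellipsoid) separation oracle applied to the dual LP, so that iteratively producing good trees against adversarial weightings and averaging them yields the claimed $O(\log n)$-approximate cut-tree packing.
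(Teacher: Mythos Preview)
The paper does not give its own proof of this lemma: it is quoted as a known result from \cite{racke2008optimal,williamson2011design} and used as a black box to conclude \Cref{lem: main LP int gap}. So there is nothing in the paper to compare your argument against.

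Your outline is, however, essentially the proof that the cited references give. The LP-duality reduction to the single-tree subproblem against an adversarial weighting $w$ is exactly R\"acke's starting point, and the inequality you wrote down is the correct dual objective. The use of an FRT-style probabilistic hierarchical decomposition with respect to the metric with edge lengths $w_e/c_e$, followed by a level-by-level accounting and a multiplicative-weights wrap-up, is precisely how the $O(\log n)$ bound is obtained in \cite{racke2008optimal}. Two small caveats: first, the paper's definition of a cut-tree packing requires each $T_i$ to have vertex set exactly $V$, whereas your hierarchical tree has extra internal cluster nodes, so you would need to note that contracting internal nodes to representative vertices preserves the cuts $C_T(x,y)$; second, the embedding $P$ in R\"acke's analysis is not literally a single shortest path but arises from routing within clusters, and the per-level accounting is somewhat more delicate than your sketch suggests. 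Neither of these is a genuine gap in the plan, only places where the write-up would need more care.
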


As discussed earlier, combining \Cref{lem: general sparsity ub coding} and \Cref{lem: main LP int gap} yields \Cref{thm: upper bound coding gap}. \qed

\section{Locally Decodable Codes in Network Coding}

In this section, we briefly discuss the limitations encountered when applying existing LDCs to network coding. We then introduce a novel variant of LDCs with a relaxed fault-tolerence property, specifically tailored to construct network coding gaps, so that we believe this variant is easier to construct than standard LDCs while still applicable to network coding.

\subsection{Limitations of Existing LDC Constructions}

\label{sec: ldc limit}

LDCs serve as the primary technical tool in this work and have been extensively studied for decades. Despite significant research, the fundamental trade-off between rate and query complexity remains a major open problem. For constant query complexity $q \ge 3$, the best known lower bound on the codeword length is $\Omega\left(k^{1+2/(q-2)}\right)$~\cite{basu2024improved}. In contrast, known upper bounds do not even reach the quasi-polynomial regime when $q = O((\log k)^c)$ for constant $c < 1$. Establishing an upper bound in this regime would directly imply network coding gaps of $(\log n)^{\Omega(1)}$ via our main lemma, highlighting the particular interest of the logarithmic-query regime.

The best known constructions of LDCs in the polylogarithmic-query regime are Matching Vector (MV) codes and Reed–Muller (RM) codes; their parameters are summarized in~\Cref{tab:comparison-table}, based on~\cite{yekhanin2012locally}. These codes satisfy perfect smoothness and tolerate a constant fraction of errors; that is, both $\delta$ and $\epsilon$ are constants. From the table, the rate of Reed–Muller (RM) codes lies in the quasipolynomial regime; however, this is achieved only when the query complexity is super-logarithmic, which is not the desired query regime. Both MV and RM codes have been studied for decades and improving them in either rate or query regime could be very challenging. 


\begin{table}[htbp]
    \centering
    \renewcommand{\arraystretch}{1.6}
    \begin{tabular}{|c|c|c|}
        \hline
        \textbf{Code Type} & $\boldsymbol{q}$ & $\boldsymbol{N}$ \\ 
        \hline
        MV code & $O(\log k)$ & 
        $\displaystyle \exp\left(\exp\left((\log k)^{O\left(\frac{\log\log q}{\log q}\right)} (\log\log k)^{\,1-\Omega\left(\frac{\log\log q}{\log q}\right)}\log q\right)\right)$ \\[8pt]
        \hline
        RM or MV codes & $O(\log k \log\log k)$ & 
        $\displaystyle k^{\,O(\log\log k)}$ \\[6pt]
        \hline
        RM code & $(\log k)^t,\; t>1$ & 
        $\displaystyle k^{\,1+\frac{1}{t-1}+o(1)}$ \\[6pt]
        \hline
    \end{tabular}
    \caption{Comparison of parameters for known LDC constructions.}
    \label{tab:comparison-table}
\end{table}

\subsection{Robust Distance LDC}
\label{sec: robust dist ldc}

In this section, we propose a novel variant of LDCs that is tailored for applications in network coding gaps. Recall from \Cref{sec: sampling ldcs} that the algorithm samples each sink with probability $(\log k)/k$, and consequently discards it with high probability. As a result, the full fault tolerance offered by traditional LDCs is not fully leveraged. This observation motivates us to define a new variant with relaxed fault tolerance requirements, which we believe is easier to construct than standard LDCs.

We begin with some preliminaries. A $q$-\emph{uniform hypergraph} $G = (V, E)$ is a hypergraph in which every edge $e \in E$ has rank $|e| = q$. A \emph{hypergraph matching} in $G$ is a subset of edges $H' \subseteq E$ such that no two edges overlap; that is, for all $e, e' \in H'$ with $e \ne e'$, we have $e \cap e' = \emptyset$. We introduce the notion of \emph{matching decodability}, a relaxation of local decodability.

\begin{definition}
\label{def: matching decodable}
A code $\C:\F^k \to \F^N$ is called \emph{$(q, \delta)$-matching decodable} if there exist $q$-uniform hypergraph matchings $H_1, \ldots, H_k$ on vertices indexed by $[N]$, each containing at least $\delta N$ hyperedges, such that for every $D \in H_i$, there exists a decoding function $f_{i,D}: \F^q \to \F$ satisfying
\[
\forall x \in \F^k,\quad x_i = f_{i,D}(\C(x)_v \mid v \in D).
\]
i.e., the $i$-th bit of the message can be decoded using the codewords whose indices lie in~$D$.
\end{definition}

This is a variant of the notion known as \emph{normally decodable}, which has been used in proving LDC lower bounds~\cite{yekhanin2012locally}.

Let $H = \bigcup_{i=1}^k H_i$. For any two hyperedges $e, e' \in H$, define the distance $\dist_H(e, e')$ as the length of the shortest path $e = e_1,e_2,\ldots,e_\ell = e'$ in $H$ connecting $e$ to $e'$, where consecutive hyperedges in the path must intersect (i.e., $\forall\, 1 \le i < \ell,\, e_i \cap e_{i+1} \ne \emptyset$).

\begin{definition}[Robust Distance LDC]
\label{def: rd ldc}
A $(q,\delta)$-matching decodable code is said to be \emph{$(q, \delta, d)$-robust distance locally decodable} if for each $i$, $|H_i| \ge \delta N$, and the following distance condition holds:
\[
\forall\, 1 \le i \le k,\ \forall\, e, e' \in H_i,\ e \ne e',\quad \dist_H(e, e') \ge d.
\]
\end{definition}

The advantage of this relaxed variant is highlighted by \Cref{lem: coding gap via rd}: the required matching size $\delta N$ can be significantly smaller—specifically, $\delta = \Omega((\log k)/k)$ suffices for our purposes, compared to the stricter requirement $\delta = \omega(1/\log k)$ for traditional LDCs to be useful for our network coding instances. On the other hand, we require the distance $d$ to be at least $\Omega((\log k)^{\epsilon})$ for some constant $\epsilon > 0$, to achieve a polylogarithmic coding gap. It is worth noting that, due to purely combinatorial limitations, $d$ can be at most $O(\log k / \log(\delta k))$. Thus, the code has to be sparse to obtain a large distance $d$. In particular, we show in \Cref{lem: ldc to rd} that standard LDCs can be transformed into robust distance LDCs using the same sampling technique introduced earlier. This demonstrates that robust distance LDCs are a relaxation of standard LDCs.

\begin{proposition}
\label{lem: coding gap via rd}
    If a $(q, \delta, d)$-robust distance locally decodable code $\C: \F^k\to \F^N$ exists, then the multi-source multicast network coding gap is at least $\Omega(\min\{\delta k, d/q\})$ for a graph of size $O(\delta Nk)$.
\end{proposition}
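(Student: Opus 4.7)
The plan is to build a tripartite gap instance $I = (G, \ses, F)$ which is essentially the sampling construction of \Cref{sec: sampling ldcs} but with the random sampling and pruning stages replaced by the combinatorial structure already provided by the robust distance LDC; I would then invoke \Cref{lem: gap instance}. Concretely, take a common source $S$ with $s_i = S$ for every $i \in [k]$, a middle layer $B = \{w_1, \ldots, w_N\}$ connected to $S$ by one edge per vertex, and set $F = \{(S, w_j) : j \in [N]\}$, so $f = N$. For each $i \in [k]$ and each hyperedge $D \in H_i$, where $H_1, \ldots, H_k$ are the matchings guaranteed by \Cref{def: rd ldc}, introduce a fresh sink $v_{i,D}$ and attach it to each $w_j$ with $j \in D$. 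Since $|H_i| \ge \delta N$, this yields $|R_i| = \delta N$, so $r = k\delta N$, $n = \Theta(\delta N k)$, and $m = N + k q \delta N = \Theta(k q \delta N)$.

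For the coding throughput I would imitate the argument in \Cref{lem: sampling nc gap}: draw each $\x_i$ uniformly from $\F$, have $S$ transmit $\C(\x)_j$ along the edge $(S, w_j)$, and let each sink $v_{i,D}$ apply the decoder $f_{i,D}$ from \Cref{def: matching decodable} to its $q$ incoming symbols, thereby recovering $\x_i$ exactly; scaling capacities by $c^* = \log|\F|$ certifies $a \ge 1$. The one step that genuinely uses the robust distance hypothesis is the lower bound on $b$, which I claim can be taken to be $\Omega(d)$. Any path in $G \setminus F$ between two distinct sinks $v_{i,D}, v_{i,D'} \in R_i$ must alternate between sink vertices and $B$-vertices,
\[
v_{i,D} = u_0 \to w_{j_1} \to u_1 \to w_{j_2} \to \cdots \to w_{j_\ell} \to u_\ell = v_{i,D'},
\]
where each $u_t$ is some sink $v_{i_t, e_t}$ with $e_0 = D$, $e_\ell = D'$, and $j_t \in e_{t-1} \cap e_t$. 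Thus $D = e_0, e_1, \ldots, e_\ell = D'$ is an intersecting hyperedge path in $H = \bigcup_i H_i$, so $\dist_H(D, D') \le \ell$, and \Cref{def: rd ldc} forces $\ell = \Omega(d)$, making the path length $2\ell = \Omega(d)$.

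Plugging $(a, b, f, m, r) = \bigl(1,\; \Omega(d),\; N,\; \Theta(k q \delta N),\; k \delta N\bigr)$ into \Cref{lem: gap instance} yields
\[
    \text{gap} \;\ge\; \frac{a \cdot r}{f + 2m/b} \;=\; \Omega\!\left(\frac{k \delta N}{N + k q \delta N / d}\right) \;=\; \Omega\!\left(\min\left\{\delta k,\; \frac{d}{q}\right\}\right),
\]
which is exactly the claimed bound, with graph size $n = \Theta(\delta N k)$. The only subtle point---bookkeeping rather than a real obstacle---is that the intermediate hyperedges $e_1, \ldots, e_{\ell-1}$ in the path above may come from matchings $H_{i'}$ with $i' \ne i$; this is precisely why \Cref{def: rd ldc} measures distance in the union $H$ rather than inside an individual $H_i$, so the hypothesis is already tuned to what this reduction needs and no further argument is required.
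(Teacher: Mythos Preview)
Your proposal is correct and follows essentially the same route as the paper: both build the tripartite instance by placing one sink per hyperedge of the matchings $H_i$, verify $a\ge 1$ via the decoders $f_{i,D}$, translate $G\setminus F$-paths between sinks of $R_i$ into intersecting hyperedge paths in $H$ to get $b=\Omega(d)$, and then plug $(a,b,f,m,r)$ into \Cref{lem: gap instance}. Your remark that the intermediate hyperedges may lie in other $H_{i'}$, and that this is exactly why \Cref{def: rd ldc} measures distance in the union $H$, is spot on (the paper's appendix states this less precisely). The only implicit assumption you use when writing $n=\Theta(\delta Nk)$ and $m=\Theta(kq\delta N)$ is $\delta k=\Omega(1)$; the paper makes this explicit, but it is harmless since otherwise the claimed gap is below~$1$ and holds trivially.
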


The proofs for both propositions are deferred to \Cref{app: rdldc-nc}, as it closely follows the argument in \Cref{sec: sampling ldcs}. 

\begin{proposition}
\label{lem: ldc to rd}
    Suppose there exists $(q, \delta, q\delta)$-locally decodable code with perfect smoothness, then a $\left(q, \Theta\bigl(\frac{\delta \log k}{qk}\bigr), \frac{\log k}{2(\log q + \log \log k)}\right)$-robust distance locally decodable code also exists.
\end{proposition}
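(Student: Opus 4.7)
The plan is to mimic the sampling-and-pruning construction from \Cref{sec: sampling ldcs}, but record the output as a collection of matchings on $[N]$ rather than as a graph. First, for each message index $i\in[k]$, I repeatedly invoke \Cref{fact: ldc decode}, starting with $U=\emptyset$ and iterating $D\gets\Call{Decode}{i,U}$, $U\gets U\cup D$ until $|U|>\delta N$. This produces a matching $H_i^{(0)}$ on $[N]$ of size $\Theta(\delta N/q)$, and each hyperedge $D\in H_i^{(0)}$ comes with a decoder $f_{i,D}:\F^q\to\F$ obtained by fixing the decoder's randomness so that its $q$ queries land exactly on $D$ (which is legal by perfect smoothness). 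Next, I independently sample each hyperedge of each $H_i^{(0)}$ with probability $p=\Theta((\log k)/k)$ to form $H_i^{(1)}$; by Chernoff, $|H_i^{(1)}|=\Theta(\delta N\log k/(qk))$ for every $i$ with high probability.

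The third step is pruning: for each $i$ and each $e\in H_i^{(1)}$, if some other $e'\in H_i^{(1)}$ satisfies $\dist_H(e,e')<d:=\log k/(2(\log q+\log\log k))$ in $H=\bigcup_j H_j^{(1)}$, remove $e$. The main obstacle, and essentially the only nontrivial step, is to show that this only discards an $o(1)$ fraction of $H_i^{(1)}$. For this I bound the branching process that grows a ball around a fixed $e\in H_i^{(1)}$ in $H$. In the unsampled hypergraph $H^{(0)}=\bigcup_j H_j^{(0)}$, each vertex $v\in[N]$ belongs to at most one hyperedge per matching, so at most $k$ hyperedges overall; hence from any hyperedge there are at most $q(k-1)$ neighbors. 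Along any path $e=e_0,e_1,\dots,e_d$ in $H$, each intermediate $e_j$ must be sampled, contributing a factor $p$, while requiring the terminal $e_d$ to lie in the specific matching $H_i$ reduces the per-step choice of the last hyperedge from $k$ to $1$. Summing over paths gives
\[
\mathbb{E}\bigl[\,|\{e'\in H_i^{(1)}\setminus\{e\}:\dist_H(e,e')\le d\}|\,\bigr]
\;\le\;\frac{(q\log k)^d}{k}.
\]

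Plugging in $d=\log k/(2(\log q+\log\log k))$ makes $(q\log k)^d\le k^{1/2}$, so the expected number of close conflicting hyperedges is at most $k^{-1/2}$. Thus each hyperedge is pruned with probability $O(k^{-1/2})$, and by Markov's inequality an $o(1)$ fraction of $H_i^{(1)}$ is removed with high probability, leaving $|H_i^{(2)}|=\Omega(\delta N\log k/(qk))$. Each surviving $D\in H_i^{(2)}$ retains its decoder $f_{i,D}$ inherited from the original LDC, so the family $\{H_i^{(2)}\}_{i\in[k]}$ realizes a $(q,\delta',d)$-robust distance LDC with $\delta'=\Theta(\delta\log k/(qk))$ and $d=\log k/(2(\log q+\log\log k))$, as claimed. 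The analysis parallels that of \Cref{lem: prune LDC instance}, with the role of graph distance replaced by hypergraph distance and vertex degrees bounded by the ``one hyperedge per matching per vertex'' property of the $H_i^{(0)}$'s.
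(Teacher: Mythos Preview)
Your proposal is correct and follows essentially the same approach as the paper. The paper constructs the tripartite graph of \Cref{alg: gap instance LDC} (with the pruning threshold adjusted to $\frac{\log k}{\log q+\log\log k}$ to drop the hypothesis $q=o(\log k)$), invokes the analysis of \Cref{lem: prune LDC instance}, and then translates surviving sinks into hyperedges, using that hypergraph distance is half the bipartite-graph distance; you carry out the identical sample-then-prune argument directly on the hypergraph, with the same branching bound $\E[\#\text{conflicts}]\le (q\log k)^d/k$ coming from the ``one hyperedge per matching per vertex'' property.
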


\bibliographystyle{alpha}
\bibliography{ref}

\appendix

\section{Definition of Network Coding via Time-Expanded Graph}
\label{app: network coding}

In this section, we define the general network coding solution via the time-expanded graph construction, following prior work \cite{ahlswede2000network,harvey2006capacity}. For each undirected edge $e \in E$, we treat it as a pair of directed edges, $\vec{e}$ and $\backvec{e}$ (i.e., one in each direction). We will define their respective capacities shortly, ensuring that their sum equals the original capacity of $e$. From this point forward, we treat all edges as directed.

\begin{definition}[Time-Expanded Graph $G^*$] Given a directed graph $G = (V, E)$, the \emph{time-expanded graph} $G^* = (V^*, E^*)$ is a directed acyclic graph constructed as follows:
    \begin{itemize}
        \item The vertex set is $V^* = V \times \Z$. \item For each edge $e = (u, v) \in E$ and each time step $t \in \Z$, add a directed edge $e_t = (u^{(t-1)}, v^{(t)})$ to $E^*$.
        \item For each vertex $v \in V$ and each $t \in \Z$, add a \emph{memory edge} $(v^{(t-1)}, v^{(t)})$ to $E^*$.
    \end{itemize}
\end{definition}

We also model sources and sinks as special edges. For each source $s_i$, we introduce two auxiliary vertices $\sigma_i$ and $\sigma'_i$, and add edges $(\sigma_i, \sigma'_i)$ and $(\sigma'_i, s_i^{(t)})$ for all $t \in \Z$, each with infinite capacity. Similarly, for each sink $r \in R_i$, we add two auxiliary vertices $\gamma_{r,i}$ and $\gamma'_{r,i}$, and edges $(\gamma_{r,i}, \gamma'_{r,i})$ and $(r^{(t)}, \gamma_{r,i})$ for all $t \in \Z$, also with infinite capacity.

The alphabet on the edges $(\sigma_i, \sigma'_i)$ and $(\gamma_{r,i}, \gamma'_{r,i})$ is $M_i$. We say that the network coding solution satisfies \emph{correctness} if, for each sink $r \in R_i$, the symbol transmitted on $(\gamma_{r,i}, \gamma'_{r,i})$, denoted $X_{(\gamma_{r,i}, \gamma'_{r,i})}$, is equal to the message $m_i$ sent by the corresponding source, i.e., $X_{(\gamma_{r,i}, \gamma'_{r,i})} = X_{(\sigma_i, \sigma'_i)} = m_i$.

We now define the capacity constraints in the time-expanded graph. All memory edges have infinite capacity. For each directed edge $\vec{e}$, let the overall alphabet be $\Gamma(\vec{e}) = \prod_{t \in \Z} \Gamma(\vec{e}^{(t)})$, and let $X_{\vec{e}} \in \Gamma(\vec{e})$ denote the collection of symbols transmitted along all time steps of $\vec{e}$. The network coding solution respects the capacity constraint if: $H(\vec{e}) + H(\,\backvec{e}) \le c_e$, where by $H(\vec{e})$ we denote the entropy of the symbol $X_{\vec{e}}$ sended along $\vec{e}$.

\section{Linear Programs for Multi Steiner Tree Packing}
\label{app: Steiner packing}

The primal LP is as follows. Recall that for each demand $i$, the terminal set is $S_i = \{s_i\} \cup R_i$. Each tree $T$ in the Steiner tree packing for the $i$-th demand must span all nodes in $S_i$. The combined packing across all demands must respect the edge capacity constraints.

\begin{align*}
    \text{maximize} \quad & \tau \\
    \text{subject to} \quad & \sum_{T : S_i \subseteq V(T)} x_{i,T} \ge \tau \cdot d_i && \forall\, 1 \le i \le k \\
    & \sum_{\substack{i, T \\ e \in T}} x_{i,T} \le c_e && \forall\, e \in E \\
    & x_{i,T} \ge 0 && \forall\, i, T
\end{align*}

The dual of this linear program is as follows:

\begin{align*}
    \text{minimize} \quad & \frac{\sum_{e\in E} c_e y_e }{ \sum_i d_i z_i} \\
    \text{subject to} \quad & \sum_{e\in T} y_e \ge z_i && \forall\, i,T,\text{s.t. } S_i\subseteq V(T) \\
    & z_i \ge 0 && \forall\, 1\le i\le k \\
    & y_e \ge 0 && \forall\, e\in E
\end{align*}

In \Cref{sec: ub}, we mentioned that the generalized sparsity~$\Psi$ defined there serves as an upper bound on the integer program of the dual under the integral constraints $z_i, y_e \in \{0,1\}$ for all $i \in [k]$ and $e \in E$.
In fact, one can verify that these two quantities differ by at most a constant factor of two, although this observation is not used in our proof.
\section{Missing Proofs of \Cref{sec: robust dist ldc}}
\label{app: rdldc-nc}

\subsection{Proof of \Cref{lem: coding gap via rd}}

The gap instance $(G,\ses,F)$ is generated via the \Cref{alg: gap instance LDC via rd ldc}, where $G = (A,B,C,E)$ is a tripartite graph. All demands and edge capacities are units.

\begin{algorithm}[H]
\caption{Gap Instance via Robust Distance LDCs}
\label{alg: gap instance LDC via rd ldc}
\begin{algorithmic}[1]
\State Set $A \gets \{S\}$ and $s_i \gets S$ for all $i \in [k]$.
\State Set $B \gets \{w_i\}_{i=1}^N$.
\State Set $F \gets \{(S, w_i) \mid i \in [N]\}$ and initialize $E \gets F$.
\State Initialize $R_i \gets \emptyset$ for all $i \in [k]$.
\For{\textbf{each} $i=1, 2, \ldots, k$} 
    \For{\textbf{each} $D\in H_i$}
        \State Add new vertex $v$ to $R_i$.
        \State Add edges $(w_j,v)$ to $E$ for each $j\in D$.
    \EndFor
\EndFor
\State Set $C \gets \bigcup_{i=1}^k R_i$.
\end{algorithmic}
\end{algorithm}

The proof of network coding throughput is at least $a = 1$ is the same as \Cref{lem: sampling nc gap}.
    
For the other parameters in \Cref{lem: gap instance} for the gap instance, the size of the third layer dominates the number of vertices, and hence $n = \Theta(\delta Nk)$ by the robust distance LDC, as we assume $\delta = \Omega(1/k)$. The number of edges $m = \Theta(nq)$, and the number of sinks $r = \Theta(n)$. The distance $b$ is at least $\Omega(d)$, as for any $i\in [k]$ and any pair of decoding hyperedges in $H_i$, their distance is at least $d$ by \Cref{def: rd ldc}, and any path from a vertex $u\in R_i$ to another $v\in R_i$ in $G\setminus F$ in the gap instance maps to a path in the hypergraph $H_i$, such that the consecutive hyperedges in the path intersect. Therefore, the distance parameter is guaranteed by the robust distance LDC. The number of cut edges is $f = N$. Upon \Cref{lem: gap instance}, the coding gap is at least \[
        \frac{a\cdot r}{f+2m/b} = \Omega\left(\min\left\{\frac{r}{f},\frac{r}{2m/b}\right\}\right) = \Omega\left(\min\left\{\delta k, d/q\right\}\right). 
    \] \qed

\subsection{Proof of \Cref{lem: ldc to rd}}

The proof closely follows the argument in \Cref{sec: sampling ldcs}, which we now reinterpret its implication in the language of coding theory. We first construct the tripartite graph as in \Cref{alg: gap instance LDC}, with the only modification lying in the pruning subroutine: as we no longer assume \( q = o(\log k) \), we modify the pruning condition for a sink vertex at line~\ref{line: pruning cond} in \Cref{alg: gap instance LDC} to
\[
\text{dist}_{G \setminus F}(u, v) \le \frac{\log k}{\log q + \log\log k}. \
\]
With this new condition, the arguments in \Cref{lem: prune LDC instance} continue to hold under the updated parameters.

For each sink \( v \in R_i \), let \( D \) denote the coordinates of the codeword corresponding to the neighbors of \( v \) in the second layer \( B \). We add a hyperedge containing the coordinates in \( D \) to the hypergraph matching \( H_i \). We now show the third parameter of the robust distance LDC holds. The distance between any two hyperedges in \( H \) is at least half the distance between the corresponding sinks \( u \) and \( v \) in the tripartite graph \( G\setminus F \). Therefore, for any $i\in [k]$ and $e,e'\in H_i$, the distance of them in $H$ is then lower bounded by $\frac12\cdot \frac{\log k}{\log q + \log\log k}$.

The remaining is to lower bound the size of each matching $H_i$. At sampling subroutine, in total at least $\delta N/q$ hyperedges are generated and each is sampled to keep with probability $(\log k)/k$. The pruning subroutine then delete subconstant fraction of the matching. Therefore, the size $|H_i| = \Omega(\frac{\delta N \log k}{q\log\log k})$, implies the second parameter of the robust distance LDC code.

\section{Re-Analysis of Matching Vector Codes}

\label{app: mv code}

The main purpose of this section is to prove \Cref{lem: main mv code}.

\begin{definition}[Matching Vector Family]
    Let $S\subseteq \Z_m\setminus \{0\}$. The families of vertors $\U = \{u_i\}_{i=1}^n$ and $\V = \{v_i\}_{i=1}^n$ of vectors in $\Z_m^h$ is said to be \emph{$S$-matching family} if the following conditions hold:
    \begin{enumerate}
        \item $\langle u_i,v_i\rangle = 0$ for every $i\in [n]$.
        \item $\langle u_i,v_j\rangle \in S$ for every $i\ne j$.
    \end{enumerate}
\end{definition}

We will show how to use matching vector families to construct LDCs. We start with some preliminaries.

\begin{fact}[\cite{efremenko20093}]
    For every odd $m$ there exists $t\le m$ such that $m\mid 2^t-1$. For the finite field $\F_{2^t} = GF(2^t)$, there exists $g\in \F_{2^t}$ that generates a subgroup of size $m$, i.e. $g^m = 1$ and $g^i\ne 1$ for $1\le i < m$. 
\end{fact}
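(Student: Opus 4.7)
The plan is to prove the two assertions separately, as each follows from a standard result in elementary number theory / group theory.

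For the first assertion, I would argue via multiplicative order. Since $m$ is odd, $\gcd(2,m)=1$, so $2$ is a unit in $\Z_m$. By Euler's theorem, $2^{\phi(m)}\equiv 1 \pmod{m}$, so the multiplicative order $t:=\mathrm{ord}_m(2)$ is finite and divides $\phi(m)$. Taking this $t$ gives $m\mid 2^t-1$ by definition of the order, and $t\le\phi(m)\le m-1 < m$ when $m\ge 2$ (the case $m=1$ is trivial, since any $t$ works). This establishes the existence of the desired $t$ with the required bound.

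For the second assertion, I would appeal to the well-known fact that the multiplicative group $\F_q^{*}$ of any finite field $\F_q$ is cyclic. In our case $q=2^t$, so $\F_{2^t}^{*}$ is cyclic of order $2^t-1$. Since $m\mid 2^t-1$ by the first part, the structure theorem for cyclic groups guarantees a unique (cyclic) subgroup of $\F_{2^t}^{*}$ of order exactly $m$. Letting $g$ be a generator of this subgroup gives $g^m=1$ while $g^i\ne 1$ for all $1\le i<m$, as required. Concretely, if $\zeta$ is a generator of $\F_{2^t}^{*}$, then $g:=\zeta^{(2^t-1)/m}$ is one such element.

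There is essentially no obstacle here: both parts are one-line consequences of standard facts (Euler's theorem and cyclicity of $\F_q^*$). The only stylistic choice is whether to write the bound as $t\le m$ (as stated) or to note the sharper $t\le\phi(m)$; I would keep the weaker bound since it is what the rest of the matching-vector-code construction in \Cref{app: mv code} relies on. Accordingly, the proof will be short — at most a few lines — and the main care is in handling the degenerate case $m=1$ cleanly and in stating precisely which standard result (Euler's theorem; cyclicity of the multiplicative group of a finite field) is being invoked.
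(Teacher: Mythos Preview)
Your proposal is correct and is exactly the standard argument. Note, however, that the paper does not supply its own proof of this fact: it is stated as a cited fact from \cite{efremenko20093} and used as a black box in the matching-vector-code construction, so there is nothing in the paper to compare against beyond confirming that your reasoning (Euler's theorem for the first part, cyclicity of $\F_{2^t}^*$ for the second) is the intended justification.
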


\begin{definition}
    Let $g$ be the generator of $\F$. A polynomial $P\in \F[x]$ is called an \emph{$S$-decoding polynomial} if the following conditions hold:
    \begin{itemize}
        \item $\forall i\in S, P(g^i) = 0$,
        \item $P(g^0) = P(1) = 1$.
    \end{itemize}
\end{definition}

\begin{fact}[\cite{efremenko20093}]
\label{fact: S decode polynomial}
    For any $S$ such that $0\not\in S$ there exists an $S$-decoding polynomial $P$ of degree $|S|$.
\end{fact}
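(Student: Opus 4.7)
The plan is to exhibit the $S$-decoding polynomial explicitly as a normalized product of linear factors, so no heavy machinery is needed.

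Concretely, I would define
\[
    P(x) \;:=\; \prod_{i \in S} \frac{x - g^i}{1 - g^i}.
\]
First I would check that this expression is well-defined, i.e.\ that the denominators are nonzero. Since $g$ has multiplicative order $m$ in $\F = GF(2^t)$, we have $g^i = 1$ if and only if $i \equiv 0 \pmod{m}$. The assumption $S \subseteq \Z_m \setminus \{0\}$ and $0 \notin S$ then guarantees $g^i \neq 1$ for every $i \in S$, so $1 - g^i \neq 0$ and $P$ lies in $\F[x]$.

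Next I would verify the two defining properties. For the degree: $P$ is a product of $|S|$ distinct nonzero scalar multiples of linear factors in $x$, hence $\deg P = |S|$. For the vanishing condition: for each fixed $j \in S$, the factor corresponding to $i = j$ vanishes at $x = g^j$, so $P(g^j) = 0$, giving the first bullet. For the normalization: substituting $x = 1$ gives
\[
    P(1) \;=\; \prod_{i \in S} \frac{1 - g^i}{1 - g^i} \;=\; 1,
\]
which is the second bullet.

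There is essentially no obstacle here; the only place one could stumble is if $g^i = 1$ for some $i \in S$, which is exactly what the hypothesis $0 \notin S$ (together with $\mathrm{ord}(g) = m$) rules out. Note that one could also argue nonconstructively by interpolation, observing that the constraints impose $|S| + 1$ conditions on a polynomial of degree $|S|$ (which has $|S| + 1$ coefficients) and that the linear system is nonsingular because the evaluation points $\{1\} \cup \{g^i : i \in S\}$ are pairwise distinct; but the explicit product form above is cleaner and immediately certifies the degree bound.
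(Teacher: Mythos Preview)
Your proof is correct. The paper states this result as a fact cited from \cite{efremenko20093} without supplying its own proof, and the explicit normalized-product construction you give is exactly the standard argument used in that literature.
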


\begin{lemma}[\cite{efremenko20093}]
\label{lem: decode}
    Let $\U,\V$ be a family of $S$-matching vectors in $\Z_m^h$, $|\U|=|\V|=k, |S|=s$. For $1\le t< m$ such that $m\mid 2^t - 1$, there exists a linear code $C:\F_{2^t}^k\to \F_{2^t}^{m^h}$ that is $(s+1,\delta,(s+1)\delta)$-locally decodable for all $\delta$, and is correct on uncorrupted codewords.
\end{lemma}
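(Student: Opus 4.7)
The plan is the standard matching-vector code construction. Fix a generator $g$ of the order-$m$ multiplicative subgroup of $\F := \F_{2^t}$, which exists because $m \mid 2^t - 1$. I would index codeword coordinates by the points of $\Z_m^h$ and define the encoding
\[
\C(\x)_{a} \;=\; \sum_{i=1}^{k} \x_i \, g^{\langle u_i, a\rangle}, \qquad a \in \Z_m^h.
\]
This is linear in $\x$, so $\C : \F^k \to \F^{m^h}$ is a linear code of the claimed block length $N = m^h$.

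For decoding the $i$-th symbol, I would invoke the $S$-decoding polynomial $P(z) = \sum_{\lambda=0}^{s} p_\lambda z^\lambda$ of degree $s$ guaranteed by \Cref{fact: S decode polynomial}. The decoder samples $a \in \Z_m^h$ uniformly at random, queries the $s+1$ coordinates $a + \lambda v_i$ for $\lambda = 0, 1, \ldots, s$, and outputs
\[
\hat{\x}_i \;=\; g^{-\langle u_i, a\rangle} \sum_{\lambda=0}^{s} p_\lambda \cdot \y_{a + \lambda v_i},
\]
where $\y$ is the (possibly corrupted) received word. The scalar $g^{-\langle u_i, a\rangle}$ depends only on $i$ and the decoder's internal randomness $a$, so no extra queries are required, and $\hat{\x}_i$ is a linear combination of the queried symbols, giving linearity of the decoder.

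Correctness on uncorrupted codewords ($\y = \C(\x)$) follows by swapping the order of summation:
\[
\sum_{\lambda=0}^{s} p_\lambda \,\C(\x)_{a + \lambda v_i} \;=\; \sum_{j=1}^{k} \x_j \, g^{\langle u_j, a\rangle} \cdot P\bigl(g^{\langle u_j, v_i\rangle}\bigr).
\]
The matching-vector property $\langle u_i, v_i\rangle = 0$ makes $P(g^0) = P(1) = 1$ and contributes $\x_i \, g^{\langle u_i, a\rangle}$ for the $j = i$ term, while $\langle u_j, v_i\rangle \in S$ for $j \ne i$ forces $P(g^{\langle u_j, v_i\rangle}) = 0$ and kills every other term. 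Multiplying by $g^{-\langle u_i, a\rangle}$ recovers $\x_i$ exactly, which also verifies the perfect-smoothness condition on the decoder.

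For the decoding radius, uniformity of $a$ over $\Z_m^h$ makes each of the $s+1$ query locations $a + \lambda v_i$ marginally uniform over $[N]$. When $\Delta(\C(\x), \y) \le \delta N$, a union bound shows that with probability at least $1 - (s+1)\delta$ no query lands on a corrupted coordinate, in which case the identity above returns $\x_i$ exactly; this yields the $(s+1, \delta, (s+1)\delta)$ guarantee for every $\delta$. The entire argument is routine algebra once the encoding and the query pattern are in place; the only mild subtlety is noting that $g^{-\langle u_i, a\rangle}$ is computable from the decoder's own randomness rather than from corrupted symbols, so it does not inflate the query complexity.
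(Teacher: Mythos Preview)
Your proposal is correct and matches the paper's proof essentially line for line: the same encoding $\C(\x)_a = \sum_i \x_i\, g^{\langle u_i, a\rangle}$, the same decoder that samples $a$ uniformly and outputs $g^{-\langle u_i, a\rangle}\sum_\lambda p_\lambda\, \y_{a+\lambda v_i}$, and the same correctness calculation via the $S$-decoding polynomial. The only cosmetic difference is that the paper packages your union-bound step by citing \Cref{fact: smooth code imply ldc} after establishing perfect smoothness, whereas you unfold that argument explicitly.
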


\begin{proof}
    We specify the encoding and decoding procedures. We simply denote the field $\F_{2^t}$ as $\F$.
    \begin{itemize}
        \item \textbf{Encoding:} Let $e_i\in \F^k$ be the $i$-th unit vector. We shall define $C(e_i) : \F^k\to \F^{m^h}$ for all $i$, and take the encoding function $C(\sum x_ie_i) := \sum x_iC(e_i)$ where the addition and multiplication are coordinate-wise. The encoding of $e_i$ is defined by a complete evaluation of a function $f_i: (\Z_m)^h\to \F$, which is defined as
        \[
            f_i(z) = g^{\langle u_i, z\rangle}, \text{ and } C_i(e_i) = (f_i(z))_{z\in (\Z_m)^h} ~.
        \]
        \item \textbf{Decoding:} The input to the decoder is a (corrupted) codeword $y$ and an index $i$. Write the $S$-decoding polynomial in \Cref{fact: S decode polynomial} as $P(z) = a_0 + a_1 z + a_2 z^2 + \ldots + a_s z^s$. The decoder performs the following:
        \begin{enumerate}
            \item Pick $w\in \Z_m^h$ uniformly at random.
            \item Query the codeword at $y(w),y(w+v_i),y(w+2v_i),\ldots,y(w+s\cdot v_i)$.
            \item Output $g^{-\langle u_i, w\rangle} (a_0 y(w) + a_1 y(w+v_i) + a_2 y(w+2v_i) + \ldots + a_s y(w+s\cdot v_i))$.
        \end{enumerate}
    \end{itemize}
    We show that the code is correct on uncorrupted codewords and smoothness, implying the lemma by \Cref{fact: smooth code imply ldc}. The smoothness is due to the uniform random of the point $w$.

    To show correctness on uncorrupted codewords, for any $i$ and $w$,
    \begin{align*}
        a_0 y(w) + a_1 y(w+v_i) + a_2 y(w+2v_i) + \ldots + a_s y(w+s\cdot v_i) = \sum_{\ell=0}^s a_\ell \sum_{j=1}^k x_j g^{\langle u_j,w+\ell\cdot v_i \rangle} ~.
    \end{align*}
    Rearranging terms, we have
    \begin{align*}
        \sum_{\ell=0}^s \sum_{j=1}^k x_j g^{\langle u_j,w+\ell\cdot v_i \rangle} & = \sum_{j=1}^k g^{\langle u_j,w\rangle} x_j \sum_{\ell=0}^s a_\ell \left(g^{\langle u_j,v_i \rangle}\right)^{\ell} \\
        & = \sum_{j=1}^k g^{\langle u_j,w\rangle} x_j P\left(g^{\langle u_j,v_i \rangle}\right) \\
        & = g^{\langle u_i,w\rangle} x_i ~. \qedhere
    \end{align*}
\end{proof}

\begin{definition}
    Let $m = \prod_{i=1}^t p_i$ be a product of distinct primes. The \emph{canonical set} in $\Z_m$ is the set of all non-zero $s$ such that for every $i\in [t]$, $s\in \{0,1\}\bmod p_i$.
\end{definition}

\begin{lemma}[\cite{dvir2011matching}]
\label{lem: mv family}
    Let $m =\ $\scalebox{0.9}{$\displaystyle\prod_{i=1}^t$}$~p_i$ be a product of distinct primes. Let $w$ be a positive integer. Let $\{e_i\},i\in[t]$ be integers such that for all $i$, we have $p_i^{e_i}>w^{1/t}$. Let $\displaystyle d=\max_ip_i^{e_i}$ and $h\ge w$ be arbitrary. Let $S$ be the canonical set; then there exists an \scalebox{1.2}{${\binom{h}{w}}$}-sized family of $S$-matching vectors in $\Z_m^n$, where $n =\ $\scalebox{1.2}{$\binom{h}{\le d}$}.
\end{lemma}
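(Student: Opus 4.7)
The strategy is the standard two-level reduction: (i) reduce the construction of an $\binom{h}{w}$-sized $S$-matching family to the existence of a suitable univariate polynomial, and (ii) build that polynomial via the Chinese Remainder Theorem from Lucas-type ingredients at each prime $p_i$. Concretely, I aim to produce a polynomial $P(x) \in \Z[x]$ of degree at most $d$ such that $P(w) \equiv 0 \pmod{m}$ and $P(a) \in S$ (the canonical set) for every $a \in \{0,1,\ldots,w-1\}$. Given such a $P$, I would expand it in the binomial basis, $P(x) = \sum_{k=0}^{d} c_k \binom{x}{k}$, index coordinates by subsets $T \subseteq [h]$ of size at most $d$ (so the ambient dimension is exactly $n = \binom{h}{\le d}$), and for each $w$-subset $W \subseteq [h]$ set $u_W(T) = c_{|T|}\cdot \mathbf{1}[T \subseteq W]$ and $v_W(T) = \mathbf{1}[T \subseteq W]$. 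The identity $\binom{|W \cap W'|}{k} = \sum_{|T|=k} \mathbf{1}[T \subseteq W]\,\mathbf{1}[T \subseteq W']$ then gives $\langle u_W, v_{W'}\rangle = P(|W \cap W'|)$, which is $P(w) \equiv 0$ when $W = W'$ and lies in $S$ otherwise (since $|W \cap W'| \in \{0,\ldots,w-1\}$).

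To build $P$, I would define, for each prime $p_i$, the polynomial
\[
P_i(x) \;=\; 1 - \binom{x - (w+1)}{\,p_i^{e_i}-1\,} \pmod{p_i},
\]
of degree $p_i^{e_i}-1 \le d$. By Lucas' theorem, $\binom{y}{p_i^{e_i}-1}$ modulo $p_i$ is $1$ if every base-$p_i$ digit of $y$ at positions $0,\ldots,e_i-1$ equals $p_i-1$, and $0$ otherwise; equivalently, it is $1$ iff $y \equiv -1 \pmod{p_i^{e_i}}$. Hence $P_i(x) \in \{0,1\} \pmod{p_i}$ for all $x$, $P_i(x) \equiv 0 \pmod{p_i}$ iff $x \equiv w \pmod{p_i^{e_i}}$, and in particular $P_i(w) \equiv 0$. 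The CRT isomorphism $\Z_m \cong \prod_i \Z_{p_i}$ (valid because the $p_i$ are distinct primes) then yields a polynomial $P \in \Z[x]$ of degree at most $d = \max_i p_i^{e_i}$ with $P \equiv P_i \pmod{p_i}$ for every $i$.

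It remains to check that $P(a)$ lies in the canonical set (is nonzero modulo $m$) for every $a \in \{0,\ldots,w-1\}$. Since $P(a) \in \{0,1\} \pmod{p_i}$ for each $i$, having $P(a) \equiv 0 \pmod{m}$ would force $P_i(a) \equiv 0 \pmod{p_i}$ for every $i$, which by the Lucas analysis means $a \equiv w \pmod{p_i^{e_i}}$ for every $i$. CRT (on the pairwise coprime moduli $p_i^{e_i}$) then gives $a \equiv w \pmod{\prod_i p_i^{e_i}}$. The hypothesis $p_i^{e_i} > w^{1/t}$ multiplied across $i$ yields $\prod_i p_i^{e_i} > w$, so the only integer in $\{0,\ldots,w\}$ congruent to $w$ modulo this product is $w$ itself, contradicting $a < w$. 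This produces the required $\binom{h}{w}$-sized $S$-matching family.

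The main obstacle is the nonvanishing of $P$ on $\{0,\ldots,w-1\}$; everything else (the binomial-basis trick, the CRT lift, and the degree bound) is routine once the correct single-prime polynomial is identified. The nontrivial input is Lucas' theorem, which guarantees the $\{0,1\}$-valued image modulo $p_i$ for the specific binomial $\binom{x-(w+1)}{p_i^{e_i}-1}$, together with the quantitative condition $\prod_i p_i^{e_i} > w$ that turns the Lucas-based zero set into a singleton modulo the combined modulus. I do not foresee difficulty bounding the degree since each $P_i$ has degree at most $p_i^{e_i}-1 \le d$, and CRT preserves the pointwise degree bound.
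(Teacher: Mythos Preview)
The paper does not give its own proof of this lemma; it is quoted from \cite{dvir2011matching} (the Grolmusz-style matching-vector construction) and used as a black box in the proof of \Cref{lem: main mv code}. Your proposal is exactly the standard argument underlying that citation---per-prime $\{0,1\}$-valued polynomials via Lucas, CRT-glued into a degree-$\le d$ integer-valued $P$ with $P(w)\equiv 0\pmod m$ and $P(a)$ in the canonical set for $a<w$, then the subset/binomial-basis encoding so that $\langle u_W,v_{W'}\rangle=P(|W\cap W'|)$---and it is correct.

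One point worth tightening in a full write-up: your Lucas justification is phrased for nonnegative $y$, but every evaluation you need is at $y=x-(w+1)\in\{-(w+1),\ldots,-1\}$. The claim $\binom{y}{p_i^{e_i}-1}\equiv \mathbf{1}[\,y\equiv -1\pmod{p_i^{e_i}}\,]\pmod{p_i}$ does extend to negative integers (use $\binom{-n}{k}=(-1)^k\binom{n+k-1}{k}$, apply Lucas on the right, and observe $(-1)^{p_i^{e_i}-1}\equiv 1\pmod{p_i}$), but this should be said explicitly. Relatedly, $P$ is most cleanly viewed as a $\Z_m$-linear combination of the integer-valued basis $\{\binom{x}{k}\}_{k\le d}$ rather than as an element of $\Z[x]$, since $\binom{x}{k}$ does not have integer monomial coefficients; your binomial-basis step already handles this, so it is only a notational wrinkle.
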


\begin{proof}[Proof of \Cref{lem: main mv code}]
The proof follows by setting appropriate parameters in \Cref{lem: mv family} and applying \Cref{lem: decode}.

\begin{enumerate}
    \item By a strengthened version of Bertrand’s postulate (Theorem 5.8 in \cite{shoup2009computational}), there exists a constant \( c \) such that the interval \([ (c/2)t \ln t, ct \ln t ]\) contains at least \( t \) distinct odd primes \( p_1, \ldots, p_t \).
    
    \item Let \( m = \prod_{i \in [t]} p_i \). This implies \( m = t^{\Theta(t)} \).

    \item The size of canonical set is $s = |S| = 2^t - 1$.
    
    \item Define \( b \) as the smallest positive integer such that \( m \mid 2^b - 1 \). Then \( b \le m - 1 \).
    
    \item Assume \( w \) is a multiple of \( t \), and set \( k = w^{w/t} \). Then it follows that \( w = \Theta\left(\frac{t \log k}{\log \log k}\right) \).
    
    \item Let \( d = \max_i p_i^{e_i} = O(w^{1/t} \cdot t \ln t) \).
    
    \item Set \( h = c' \cdot w^{1 + 1/t} \), where \( c' \) is a sufficiently large constant to ensure \( h \ge d \).
    
    \item Observe that
    \[
    \binom{h}{w} \ge \left(\frac{h}{w}\right)^w \ge k.
    \]
    
    \item Also, note that
    \[
    \binom{h}{\le d} \le d \cdot \left(\frac{e h}{d}\right)^d.
    \]
    
    \item Then the total code length is
    \[
    N = m^{\binom{h}{\le d}} \le \exp\exp\left(t \ln t \cdot w^{1/t} \ln w\right) \le \exp\exp\left(O\left((\log k)^{1/t} (\log\log k)^{1 - 1/t} \cdot t \ln t\right)\right).
    \]
\end{enumerate}

Finally, the assumption in (5) \( k = w^{w/t} \) can be made without loss of generality. If \( k \) does not exactly have this form, we can pad the message with zeros to obtain a message of length \( k' \) that satisfies this structure. This padding incurs at most a quadratic blowup, which does not affect the asymptotic bounds.

\end{proof}

\end{document}